\crefname{section}{Sect.}{Sect.}
\crefname{proposition}{Prop.}{Prop.}    
\crefname{lemma}{Lm.}{Lm.}  
\crefname{theorem}{Thm.}{Thm.}
\theoremstyle{definition}
\newtheorem{proposition}{Proposition}[section]
\theoremstyle{definition}
\newtheorem{lemma}[proposition]{Lemma}
\theoremstyle{definition}
\newtheorem{theorem}[proposition]{Theorem}
\theoremstyle{definition}
\newtheorem*{remark}{Remark}
\newcommand{\cc}{\mathbb{C}}
\newcommand{\rr}{\mathbb{R}}
\newcommand{\tr}{\operatorname{tr}}
\newcommand{\ch}{\operatorname{ch}}
\newcommand{\rk}{\operatorname{rk}}
\def\e{\mathrm{e}}
\def\i{\mathrm{i}}
\numberwithin{equation}{section}
\renewenvironment{proof}[1][\proofname]{%
  \par\pushQED{\qed}\normalfont%
  \topsep6\p@\@plus6\p@\relax
  \trivlist\item[\hskip\labelsep\bfseries#1\@addpunct{.}]%
  \ignorespaces
}{%
  \popQED\endtrivlist\@endpefalse
}
\begin{document}

\bibliographystyle{plain}

\title{Galilei covariance of the theory of Thouless pumps}

\author[1]{Tilman Esslinger}
\author[2]{Gian Michele Graf}
\renewcommand*{\Authands}{, }
\author[2,*]{Filippo Santi}
\renewcommand{\Affilfont}{\mdseries \itshape}
\affil[1]{Institute for Quantum Electronics, ETH Zurich, 8093 Zurich, Switzerland}
\affil[2]{Institute for Theoretical Physics, ETH Zurich, 8093 Zurich, Switzerland}
\affil[*]{Corresponding author: fsanti@ethz.ch}
\date{\today}
\maketitle

\begin{abstract}

\noindent The Thouless theory of quantum pumps establishes the conditions for quantized particle transport per cycle, and determines its value. When describing the pump from a moving reference frame, transported and existing charges transform, though not independently. This transformation is inherent to Galilean space and time, but it is underpinned by a transformation of vector bundles. Different formalisms can be used to describe this transformation, including one based on Bloch theory. Depending on the chosen formalism, the two types of charges will be realized as indices of either the same or different kinds. Finally, we apply the bulk-edge correspondence principle, so as to implement the transformation law within Büttiker's scattering theory of quantum pumps.
\end{abstract}


\section{Introduction}
A quantum pump can be modeled in terms of a 1-dimensional Fermi gas moving in a periodic potential that is periodic in time and, quite conveniently but not necessarily, in space as well. Thouless showed \cite{thouless} that charge transport is quantized, provided the Fermi level remains in a gap throughout the pumping cycle. The system is supposed to depend parametrically on physical time $t$, albeit slowly, so that it can be described in the adiabatic approximation. Very controlled Thouless pumping has recently been realized in experiments with ultracold atoms using dynamically controlled superlattice potentials from interfering laser beams, where the drift of the atomic cloud as a result of the pumping was observed and the topological invariance could be extracted \cite{Nakajima,Lohse,Citro}. Beyond this, the robustness against disorder \cite{Nakajima2021}, the effects of atom-atom interactions \cite{Walter}, as well as a drift reversal in the presence of an external trapping potential \cite{zhu} have been observed.\\

Realizations of Thouless pumps that are based on solid state devices are naturally described in their rest frame. By contrast, potentials that are shaped by laser beams through their changing interference patterns do not single out any particular frame of reference. To illustrate the idea, consider two superimposed optical lattices,
creating the potential
\begin{equation*}
  V(x,t)= \sum_{i=1,2}V_i\sin^2(k_ix-\omega_it)\,,
  \end{equation*}
where $k_1$, $k_2$ are commensurate. In a reference frame having
relative velocity $v$, the two
frequencies are $\hat{\omega}_i=  \omega_i-k_iv$. By suitable choice
of $v$, one can make either frequency vanish, but not both, as a rule. Such conditions call for the covariance of the theory of Thouless pumps w.r.t. Galilei transformations. Quite intuitively though, time periodicity could be lost in a generic moving frame and is preserved only for specific velocities between frames.
\begin{figure}[H]
\centering
\includegraphics[scale=0.5]{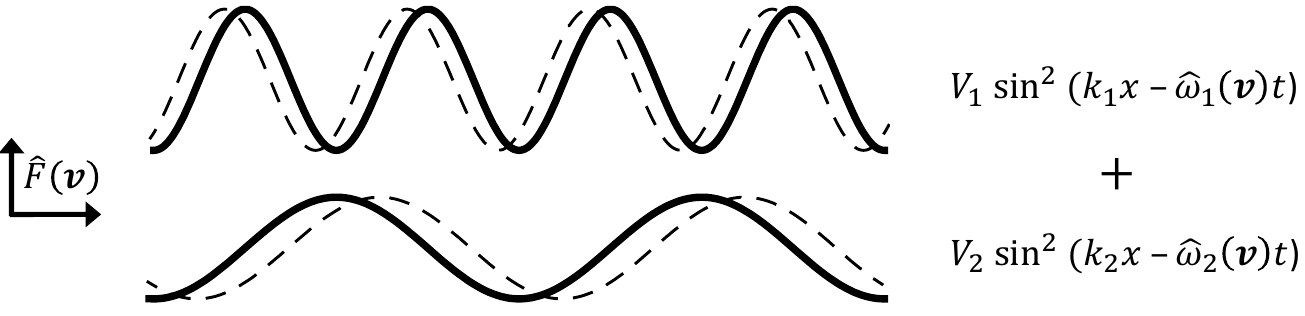}
\caption{ Illustration of periodic potentials $V_1$ and $V_2$ forming a Thouless pump, viewed from a reference frame $\hat{F}$ moving with constant velocity $v$. Dashed lines indicate potentials at a later time.}
\end{figure}
Let us consider the family of Hamiltonians on the real line 
\begin{equation}\label{hamiltonian}
H(s)=p^2 + V(x,s)\,, 
\end{equation}
smoothly parametrized by the adiabatic time $s=\varepsilon t$, ($\varepsilon\ll 1$), and where the potential $V(x,s)=V(x,s)^{*}$ has the stated periodicities:
\begin{align}
V(x,s+T)&=V(x,s)\,, \label{periodicity1}\\  
V(x+L,s)&=V(x,s)\,. \label{periodicity2} 
\end{align}
There is room in (\ref{hamiltonian}) for several channels, hence the wavefunctions may have several components and $V$ be matrix-valued.\\

The charge $Q$ transported during a period $T$ across a fixed fiducial point can be expressed in topological terms. Indeed \cite{thouless}, it is equal to the Chern number,
\begin{equation}\label{chern}
Q=\ch P\,, \qquad \ch P \coloneqq \frac{\mathrm{i}}{2\pi} \oint ds\oint dk \tr(P[\partial_s P,\partial_k P])\,,
\end{equation}
where $s$ and $k$ run over a time period and the Brillouin zone (a circle), respectively; and $P$ is the projection onto the states below the Fermi level, inducing projections $P(s,k)$. More conveniently, $P$ should be viewed as a vector bundle over a 2-torus (the Fermi bundle) arising as a subbundle $P\subset E$ of the Bloch bundle $E$, as will be explained in \cref{section 2}. For now, we just add that \cref{chern} refers to the Bloch trivialization of $E$. Another expression for $\ch P$, not relying on any trivialization, is
\begin{equation}\label{12}
\ch P=\frac{1}{2 \pi} \oint ds \oint dk \tr(P [[P, x], \partial_s P])\,,
\end{equation}
where all operators under the trace, except for $x$, but including $[P,x]$, act fiberwise on the fibers of $P$.\\

Less surprisingly, another physical quantity that is expressed in topological terms, besides $Q$, is the charge residing in a spatial period $L$, i.e. the number of occupied bands. It equals the rank of $P$,
\begin{equation}
N=\rk P\,, \qquad \rk P=\dim P_{s,k}\,,\qquad (\text{any } s,k)\,.    
\end{equation}
Let us point out similarities and differences between the \textit{transported} and the \textit{existing charges}, $Q$ and $N$. For one thing, $Q$ is based on a temporal period and on a fixed fiducial point in space. Likewise, $N$ is based on a spatial period, viewed at a fixed time. Any displacement of the temporal or spatial period or of the reference points in space or time does not affect either $Q$ or $N$. Still, the two corresponding indices, $\ch P$ and $\rk P$, are of a different kind, in that only the latter can be determined from a single fiber of the bundle. They are referred to as \textit{strong} and \textit{weak} indices of the problem at hand, following the general terminology, cf. e.g. \cite{hasan}. More generally, restricting the base space, e.g. the 2-torus, to one of smaller dimension $d$ induces a restriction of the vector bundle, thereby giving rise to weak indices. Here, $\rk P$ arises in $d=0$.\\ 

The main result, to be presented in the next section, is that the indices mix under Galilei transformations in a way that establishes the covariance of the theory.\\

The structure of the article is as follows. In \cref{section 2} we will address Galilean relativity for various purposes, such as the transformation of transported and existing charges as it is brought about by classical considerations, but also and foremost about how those transformations are embodied in the Thouless theory of pumps; in particular, how the above bundles transform. We will moreover give two independent proofs of the main result, based on \cref{chern,12}, respectively. In \cref{section 3} we shall recall an approach to quantum pumps that forgoes periodicity in space, while retaining it in time, thereby encoding the transported charge $Q$ by means of an index not relying on Bloch theory. If however space periodicity is added to the scheme, then both $Q$ and $N$ will be realized by indices of the same kind, and indeed as Chern numbers of two 2-dimensional subtori of the same 3-dimensional torus. In \cref{section 4}, we will consider a truncated system and relate some invariants pertaining to the scattering problem at the edge of the pump to the indices defined in \cref{section 3}. This correspondence will provide a new proof of Sturm's oscillation theorem. Appendices are devoted to technical results needed in \cref{section 3}. 

\section{Galilei covariance}\label{section 2}
The Galilei transformation (boost) 
\begin{equation}\label{boost}
\hat{s}=s\,,\qquad \hat{x}=x-vs   
\end{equation}
relates a frame $F$ conventionally at rest to a frame $\hat{F}$ which moves at constant velocity $v$.
The potential transforms as $\hat{V}(\hat{x},s)=V(x,s)$, i.e. as
\begin{equation}\label{transf potential}
\hat{V}(\hat{x},s)=V(\hat{x}+vs,s)\,.    
\end{equation}
For the Thouless theory to be applicable in the frame $\hat{F}$ too, $\hat{V}$ must also satisfy (\ref{periodicity1}, \ref{periodicity2}), though possibly with other periods $\hat{T}$ and $\hat{L}$, be they fundamental or not. To begin with, let the periods $T$ and $L$ be fundamental. We then have:
\begin{lemma}\label{lemma 0}
The potential $\hat{V}$ has spatial period $L$. It has temporal period $\hat{T}$ iff $v\hat{T}=mL$, $\hat{T}=nT$ for some $m,n\in\mathbb{Z}$.
\begin{proof}
By (\ref{transf potential}), $\hat{T}$ is a temporal period for $\hat{V}$ iff $(v\hat{T},\hat{T})$ is a spacetime translation leaving $V$ invariant, and hence a vector of the lattice generated by $(L,0)$ and $(0,T)$.
\end{proof}
\end{lemma}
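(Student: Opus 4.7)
The plan is to translate the two desired periodicities of $\hat V$ into invariance conditions on $V$ via the transformation law (\ref{transf potential}), and then read off the constraints from the period lattice of $V$.

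For the spatial period, I would simply substitute $\hat x\mapsto \hat x+L$ on the right-hand side of (\ref{transf potential}) and invoke (\ref{periodicity2}): $\hat V(\hat x+L,s)=V(\hat x+L+vs,s)=V(\hat x+vs,s)=\hat V(\hat x,s)$. This requires no hypothesis on $v$ or on $\hat T$, so $L$ is always a spatial period of $\hat V$ (fundamental or not).

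For the temporal period, I would compute $\hat V(\hat x,s+\hat T)=V(\hat x+v(s+\hat T),s+\hat T)$ and compare with $\hat V(\hat x,s)=V(\hat x+vs,s)$. Setting $y=\hat x+vs$ (which sweeps out all of $\mathbb{R}$ as $\hat x$ varies at fixed $s$), temporal periodicity of $\hat V$ becomes equivalent to $V(y+v\hat T,s+\hat T)=V(y,s)$ for all $y,s$, i.e.\ to the statement that the spacetime vector $(v\hat T,\hat T)$ acts as a symmetry of $V$.

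The key step is then to identify the invariance lattice of $V$. Here I would use the assumption that $L$ and $T$ are fundamental periods, understood in the strong 2D sense that the full lattice of spacetime translations preserving $V$ equals $L\mathbb{Z}\times T\mathbb{Z}$; membership in this lattice is exactly $v\hat T=mL$ and $\hat T=nT$ with $m,n\in\mathbb{Z}$. The only subtle point is precisely this interpretation of ``fundamental'': a priori $V$ could admit diagonal invariances $(a,b)$ not generated by $(L,0)$ and $(0,T)$, which would allow additional, shorter temporal periods for $\hat V$. The iff form of the lemma makes it clear that this possibility is ruled out by the hypothesis, so no further argument is needed beyond pointing to the lattice structure.
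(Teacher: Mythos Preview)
Your argument is correct and follows the same route as the paper: reduce the temporal periodicity of $\hat V$ to the statement that $(v\hat T,\hat T)$ lies in the spacetime invariance lattice of $V$, then invoke that this lattice is generated by $(L,0)$ and $(0,T)$. You have simply spelled out the spatial-period step and the change of variable more explicitly than the paper's one-line proof.
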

In conclusion, $\hat{V}$ is admissible, provided that
\begin{equation}\label{gen case}
v\hat{T}=\hat{L}\,,    
\end{equation}
where
\begin{equation}\label{periods}
\hat{T}=nT\,, \qquad \hat{L}=mL    
\end{equation}
for some $n,m\in\mathbb{Z}$. Since (\ref{periodicity1}, \ref{periodicity2}) imply the same for $\hat{T}$ and $\hat{L}$, we may read \eqref{periods} as a redefinition of the periods and assume 
\begin{equation}\label{base case}
\hat{T}=T\,,\qquad \hat{L}=L\,,\qquad vT=L\,,    
\end{equation}
while no longer requiring them to be fundamental.\\

The covariance can be addressed from two points of view. The first one is (non-relativistic) electrodynamics. The charge and current densities transform as 
\begin{equation}\label{22a}
\hat{\rho}=\rho\,, \qquad \hat{\jmath}= j - \rho v\,,    
\end{equation}
where $\rho=\rho(x,s)$, $\hat{\rho}=\hat{\rho}(\hat{x},s)$ and likewise for $\hat{\jmath}$, $j$. 
The existing and transported charges are
\begin{equation*}
N\coloneqq\int_0^L\rho\,dx\,, \qquad Q\coloneqq\int_0^T j \,ds\,.    
\end{equation*}
They will be commented upon later, including their transformation law, which for $\rho$ and $j$ periodic reads
\begin{equation}\label{transf}
\hat{N}= mN\,, \qquad \hat{Q} = nQ - mN
\end{equation}
in the general case (\ref{gen case}), or just 
\begin{equation}\label{claim0}
\hat{N}= N\,, \qquad \hat{Q} = Q - N    
\end{equation}
in the simpler but not more restrictive case (\ref{base case}). The latter pair of equations implies the former because $N^{\prime}=mN$, $Q^{\prime}=nQ$, under the redefinition of periods, cf. (\ref{periods}). 
For now we observe that \cref{transf} is immediate from (\ref{22a}) in case $\rho$ and $j$ are constants, since then $N=\rho L$ and $Q=jT$.\\ 

The second point of view is to consider the bundle $\hat{P}$ as arising from the potential $\hat{V}$ in (\ref{transf potential}) by the same way $P$ arose from $V$ in (\ref{hamiltonian}).
\begin{theorem}[Galilei covariance]\label{theorem1}
Under the stated assumptions we have
\begin{equation}\label{24a}
\rk \hat{P} = m\rk P\,, \qquad \ch\hat{P} = n\ch P - m \rk P    
\end{equation}
in the general case (\ref{gen case}, \ref{periods}) and in particular
\begin{equation}\label{24b}
\rk  \hat{P}= \rk P\,, \qquad \ch\hat{P}= \ch P - \rk P
\end{equation}
in the simpler case $n=m=1$.
\end{theorem}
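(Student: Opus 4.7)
\emph{Reduction to the base case.} I would first show that \eqref{24a} follows from \eqref{24b} by redefining periods $T\to nT$, $L\to mL$ in the original frame. Under $T\to nT$ the $s$-circle in \eqref{chern} is wrapped $n$ times, multiplying $\ch P$ by $n$ and leaving the rank alone. Under $L\to mL$, Brillouin-zone folding identifies the new Bloch fiber at $k$ with $\bigoplus_{j=0}^{m-1}E_{(s,\,k+2\pi j/(mL))}$; the rank is multiplied by $m$, while a change of variables in the $k$-integral of \eqref{chern} leaves $\ch P$ unchanged. Thus the redefined bundle $P'$ satisfies $\rk P'=mN$, $\ch P'=nQ$, and applying \eqref{24b} to the pair $(P',\hat P)$ with matching periods immediately gives \eqref{24a}.

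\emph{Main step: Bloch approach via \eqref{chern}.} In the base case $\hat T=T$, $\hat L=L$, $vT=L$, the Galilei transformation $\hat x=x-vs$ relates Bloch functions through $\hat\psi(\hat x,s)=\psi(\hat x+vs,s)$; writing $\psi=e^{\i kx}u$ and $\hat\psi=e^{\i k\hat x}\hat u$ yields
\[\hat u(\hat x,s)=e^{\i kvs}\,u(\hat x+vs,s).\]
This defines fiberwise unitaries $V_{(s,k)}\colon E_{(s,k)}\to\hat E_{(s,k)}$ that intertwine the Bloch Hamiltonians, hence $\hat P(s,k)=V_{(s,k)}P(s,k)V_{(s,k)}^{-1}$. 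The obstruction to $V$ being a global bundle isomorphism over the base 2-torus is the $s$-cocycle $V_{(s+T,k)}=e^{\i kL}V_{(s,k)}$; the $k$-periodicity is satisfied without twist. These clutching data define a line bundle $\mathcal L$ over $T^2$, and $V$ thereby descends to a global isomorphism $E\cong\hat E\otimes\mathcal L$, so that $\hat P\cong P\otimes\mathcal L^{-1}$.

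\emph{Chern number of the twist, and conclusion.} The transition $k\mapsto e^{\i kL}$ has winding number $+1$ on the $k$-circle, giving $\ch\mathcal L=1$; equivalently, the connection $A=\i(sL/T)\,dk$ has curvature $\i(L/T)\,ds\wedge dk$ integrating to $2\pi\i$ over $T^2$. Combining with the standard identity $\ch(P\otimes\mathcal L^{-1})=\ch P-\rk P\cdot\ch\mathcal L$ (valid because the base is a surface and $\mathcal L$ is a line bundle), together with the obvious $\rk\hat P=\rk P$, delivers \eqref{24b}.

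\emph{Alternative via \eqref{12}.} An independent proof uses $\hat H(s)=U(s)H(s)U(s)^{-1}$ with $U(s)=e^{\i svp}$, whence $\hat P(s)=U(s)P(s)U(s)^{-1}$ on $L^2(\mathbb R)$. Since $U^{-1}xU=x-vs$ is a scalar shift and $U$ commutes with $p$, one finds $[\hat P,x]=U[P,x]U^{-1}$ and $\partial_s\hat P=U(\partial_s P)U^{-1}+\i v[p,\hat P]$. Substituted into \eqref{12}, cyclicity collapses the first piece to $\ch P$ while leaving a residual $v$-proportional term, which after the same cyclic manipulation equals
\[\frac{\i v}{2\pi}\oint ds\oint dk\,\tr\bigl(P[[P,x],[p,P]]\bigr).\]
I expect the main obstacle will be to evaluate this residue as $-\rk P$, which ought to follow from a Brillouin-zone sum rule relating the integrand to the Bloch-projected momentum and ultimately to the number of filled bands.
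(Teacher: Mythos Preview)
Your reduction to the base case matches the paper's exactly, including the push-forward/pull-back description of Brillouin-zone folding and period repetition.

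Your main argument via the line-bundle tensor product is correct and is a genuinely different packaging from either of the paper's two proofs. The paper's alternative proof also works in the Bloch trivialization and uses the same conjugating family $U=\widetilde{\mathcal T}_{-vs}=e^{\i kvs}\mathcal T_{-vs}$ that you call $V_{(s,k)}$, but instead of interpreting the monodromy $V_{(s+T,k)}=e^{\i kL}V_{(s,k)}$ as the clutching function of a line bundle $\mathcal L$ and invoking $\ch(P\otimes\mathcal L^{-1})=\ch P-\rk P\cdot\ch\mathcal L$, the paper uses the Chern--Simons identity $\tr(P_U\,dP_U\wedge dP_U)-\tr(P\,dP\wedge dP)=d\,\tr(PU^{-1}dU)$ and applies Stokes on the fundamental rectangle $[0,T]\times[0,G]$; the failure of global single-valuedness of $U$ is what produces the nonzero boundary contribution $-\rk P$. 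Your approach trades that explicit Stokes computation for a topological identity, which is cleaner but requires you to verify carefully that $V$ indeed descends to a global isomorphism $P\cong\hat P\otimes\mathcal L$ (your statement ``$E\cong\hat E\otimes\mathcal L$'' is slightly awkward since $E=\hat E$ is the same infinite-rank Bloch bundle; the content is in the finite-rank subbundles).

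Your sketch of the alternative via \eqref{12} reproduces the setup of the paper's \emph{first} proof and arrives at the same residual integral
\[
-\frac{\i v}{2\pi}\oint ds\oint dk\,\tr\bigl(P[[P,x],[P,p]]\bigr)\,,
\]
but you stop where the real work begins. The paper evaluates this via the algebraic identity $P[[P,x],[P,p]]P=[PxP,PpP]-P[x,p]P$ (a Gauss--Codazzi relation for the projection $P$), then shows that $\tr_{E_{s,k}}[PxP,PpP]=\i\,\partial_k\tr_{E_{s,k}}(PpP)$ is a total $k$-derivative while $P[x,p]P=\i P$ yields $\rk P$ directly. If you want this route to stand on its own you would need to supply that identity and the $\partial_k$-rewriting; otherwise your line-bundle argument already suffices.
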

The matching transformation laws (\ref{transf}) and (\ref{24a}) embody the covariance of the theory of quantum pumps, as encapsulated in $Q=\ch P$, $N=\rk P$.\\

The proof of Theorem~\ref{theorem1} will be based on the transformation law between bundles, $P$ and $\hat{P}$. It  nonetheless pays to first forgo them and to use purely classical terms, in order to relate existing and transported charges, as they arise in different frames. We shall do so next.  

\begin{proof}[Proof of (\ref{transf})]
As mentioned, it suffices to prove the Eqs. (\ref{claim0}), the first of which follows from \cref{22a} for $\rho$. As for the second one, consider the shaded triangle in the diagram and its sides, see \cref{fig1}. 
\begin{figure}[H]
\centering
\begin{picture}(0,0)%
\includegraphics{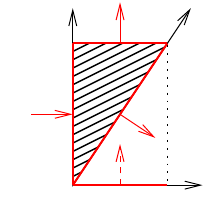}%
\end{picture}%
\setlength{\unitlength}{4144sp}%
\begingroup\makeatletter\ifx\SetFigFont\undefined%
\gdef\SetFigFont#1#2#3#4#5{%
  \reset@font\fontsize{#1}{#2pt}%
  \fontfamily{#3}\fontseries{#4}\fontshape{#5}%
  \selectfont}%
\fi\endgroup%
\begin{picture}(1557,1647)(-104,-1201)
\put(1396,299){\makebox(0,0)[lb]{\smash{{\SetFigFont{12}{14.4}{\familydefault}{\mddefault}{\updefault}{\color[rgb]{0,0,0}$\hat{x}=0$}%
}}}}
\put(240, 29){\makebox(0,0)[lb]{\smash{{\SetFigFont{12}{14.4}{\familydefault}{\mddefault}{\updefault}{\color[rgb]{0,0,0}$T$}%
}}}}
\put(1081,-1141){\makebox(0,0)[lb]{\smash{{\SetFigFont{12}{14.4}{\familydefault}{\mddefault}{\updefault}{\color[rgb]{0,0,0}$L$}%
}}}}
\put(300,-1141){\makebox(0,0)[lb]{\smash{{\SetFigFont{12}{14.4}{\familydefault}{\mddefault}{\updefault}{\color[rgb]{0,0,0}$0$}%
}}}}
\put(1396,-1141){\makebox(0,0)[lb]{\smash{{\SetFigFont{12}{14.4}{\familydefault}{\mddefault}{\updefault}{\color[rgb]{0,0,0}$x$}%
}}}}
\put(260,299){\makebox(0,0)[lb]{\smash{{\SetFigFont{12}{14.4}{\familydefault}{\mddefault}{\updefault}{\color[rgb]{0,0,0}$s$}%
}}}}
\put(170,-610){\makebox(0,0)[lb]{\smash{{\SetFigFont{12}{14.4}{\familydefault}{\mddefault}{\updefault}{\color[rgb]{1,0,0}$Q$}%
}}}}
\put(910,-440){\makebox(0,0)[lb]{\smash{{\SetFigFont{12}{14.4}{\familydefault}{\mddefault}{\updefault}{\color[rgb]{1,0,0}$\hat{Q}$}%
}}}}
\put(860,290){\makebox(0,0)[lb]{\smash{{\SetFigFont{12}{14.4}{\familydefault}{\mddefault}{\updefault}{\color[rgb]{1,0,0}$N$}%
}}}}
\end{picture}%
\caption{Existing and transported charges.}\label{fig1}
\end{figure}
Since the spacetime current density $\vec{\jmath}=(\rho, j)$ satisfies $\operatorname{div} \vec{\jmath}=0$, we find
\begin{equation*}
0 =\int_{\partial A} \vec{\jmath} \cdot d \vec{n} =\int_0^T(v \rho-j)\, ds\bigg|_{\hat{x}=0}+\int_0^T j \,ds\bigg|_{x=L}-\int_0^L \rho\, dx\bigg|_{s=0}\,,
\end{equation*}
where, in the last term, we moved the evaluation from $s=T$ to $s=0$ by the assumed periodicity. The first term equals $-\int_0^T \hat{\jmath}\,ds\big|_{\hat{x}=0}$. Thus $\hat{Q}=Q-N$.
\end{proof}
%
As an aside to \cref{theorem1}, Thouless pumps belong to class $A$ (no symmetry), $d=2$, of the periodic table of topological matter \cite{az,kitaev}. Accordingly, non-trivial indices arise in $d=2,0$. They are indeed $\ch P$ (strong) and $\rk P$ (weak).\\ 

Before engaging in the proof of \cref{theorem1}, we elaborate on the Fermi bundle $P$ seen there, as well as on the (larger) Bloch bundle $E$. To this end, let $\mathcal{T}_a$ be the translation operator, 
\begin{equation}\label{1}
(\mathcal{T}_a \psi)(x)=\psi(x-a)\,, \qquad (a \in \rr)\,,
\end{equation}
i.e. $\mathcal{T}_a=\e^{-\i pa}$ acting on $\psi \in$ $L^2(\rr, \cc^n)$ with generator $p=-\i \, d/dx$. More generally, the wavefunctions $\psi$ can be taken to be $L^2$ locally uniformly, as it appears fit for quasi-periodic wavefunctions $\psi$ of period $L$,
\begin{equation}\label{2}
\mathcal{T}_a \psi=\e^{-\i  ka} \psi\,, \qquad (a \in L \mathbb{Z})\,, 
\end{equation}
and quasi-momentum $k \in \rr\!\mod 2\pi/L \equiv S^1$. The Bloch bundle $E$ is a vector bundle \cite[Sect. 9.3]{nakahara}. Its base space is the torus $S^{1} \times S^{1} \ni(s, k)$, while its fibers $E_{s,k}\equiv E_k$ consist of quasi-periodic wavefunctions (\ref{2}). Since translations commute, they act fiberwise:
\begin{equation}\label{3}
\mathcal{T}_a: E_{s, k} \rightarrow E_{s, k}\,, \qquad(a\in \rr)\,;
\end{equation}
and so does
\begin{equation}\label{4}
p=-\i  \frac{d \mathcal{T}_a}{da}\bigg|_{a=0}\,.
\end{equation}
The fibers $E_{s,k}=E_k$ are equipped with an $L^2$-inner product obtained by restriction of $\psi$ to any quasi-period (of length $L$).
The main interest though is for the subbundle $P\subset E$. It is defined in terms of the spectral projection $P(s)$ associated to $H(s)$ and the interval $(-\infty, \mu]$ of energies below the Fermi level $\mu$ (Fermi projection). Basic properties are: $\mu$ lies in a spectral gap,
\begin{equation}\label{fermi level}
\mu\notin \sigma(H(s))\,, \qquad (s\in S^1)\,,    
\end{equation}
as by the standing assumption of the theory; $P(s)$ commutes with lattice translations, as implied by (\ref{periodicity2}),
\begin{equation}\label{5}
[\mathcal{T}_a, P(s)]=0\,, \qquad(a \in L\mathbb{Z})\,,
\end{equation}
and the dependence $s\mapsto P(s)$ is smooth thanks to the general assumptions. As a result, projections $P(s, k): E_{s, k}\rightarrow E_{s, k}$ of constant rank are induced, and with them a subbundle $P$ having fibers $P_{s,k}=\operatorname{ran} P(s, k)$, to the effect  that $\operatorname{ran} P(s)$ consists of $L^2$-sections of $P_{\{s\} \times S^{1}}$. As we shall see, $P$ is  of finite rank, $\rk P$. The position operator $x$ is of course not translation invariant,
\begin{equation}\label{6}
\mathcal{T}_a x \mathcal{T}_a^* = x - a\mathds{1}\,, \qquad (a \in \rr)\,,
\end{equation}
but the commutator $\i [P,x]$ is invariant for $a \in L\mathbb{Z}$, because of (\ref{5}) and $[P, \mathds{1}]=0$. As a result,
\begin{equation}\label{7}
\i [P, x]: E_{s, k} \rightarrow E_{s, k}
\end{equation}
acts fiberwise, and the same notation shall be kept for the so restricted action. The same conclusion can be reached as follows: The multiplication operator $\e^{\i \varphi x}$ maps $E_k \rightarrow E_{k+ \varphi}$, cf. (\ref{2}), hence
\begin{equation}\label{8}
\e^{-\i  \varphi x} P \e^{\i \varphi x}: E_k \rightarrow E_{k+\varphi} \rightarrow E_{k+\varphi} \rightarrow E_k
\end{equation}
(with index $s$ momentarily suppressed), and (\ref{7}) follows by
\begin{equation}\label{9}
\i [P, x]=\frac{d}{d\varphi}(\e^{-\i  \varphi x} P\,\e^{\i \varphi x})\Big|_{\varphi=0}\,.
\end{equation}
The Bloch trivialization of $E$ arises by identifying the fibers $E_{s,k}=E_k$ with the typical fiber $F$ consisting of $L^2$-periodic functions $u(x)$, and indeed $E_k\equiv F$, $\psi\mapsto u$ by means of ``Bloch's theorem''
\begin{equation}\label{10}
\psi(x)=u(x) \e^{\i kx}\,.
\end{equation}
The resulting trivialization of the (actually trivial) bundle $E$ is not global, since (\ref{10}) relies on a lift $\rr\rightarrow S^{1}$, $\widetilde{k} \mapsto k= \widetilde{k}\, \text{mod}\, 2\pi / L$ that is not invertible. We still denote by $P(s,k)$ the projection of the same name when viewed in the trivialization, i.e. as a map $F \rightarrow F$. Then, we claim,
\begin{equation*}
\i [P, x]=\partial_k P\,.
\end{equation*}
Indeed, $\e^{\i  \varphi x}: E_{s,k}\equiv F\rightarrow E_{s, k+\varphi}\equiv F$ acts as the identity, cf. (\ref{10}); thus (\ref{8}) acts as $P(s, k+\varphi)$ and the claim follows from (\ref{9}). We conclude that Eq. (\ref{chern}) is abstracted from the Bloch trivialization in which it was formulated by rewriting it as \cref{12}, where $\i [P,x]: E_{s,k}\rightarrow E_{s,k}$ is as in (\ref{7}) and the trace is over $E_{s, k}$.
\begin{proof}[Proof of \cref{theorem1}]
First of all, the Fermi projection $P(s)$ has fibers $P_{s,k}$ that are of finite rank, since the fibers of the Hamiltonian (\ref{hamiltonian}) have compact resolvent \cite[Thm.$\,$XIII.67]{reedsimoniv}. Second of all, we can set w.l.o.g. $m=n=1$. Indeed, we may consider the two redefinitions (\ref{periods}) one at a time. By replacing the period $L$ with $\hat{L}\coloneqq mL$, we obtain another one, cf. (\ref{periodicity2}). Under this change, the Brillouin zone shrinks by a factor $m$, in view of the identification of quasi-momenta differing by $2\pi/mL$, but their fibers get summed up (by way of a push-forward). More explicitly, denoting by $S_{\ell}^1$ a circle of length $\ell$, we have 
\begin{equation*}
S^1_{2\pi/L}\rightarrow S^1_{2\pi/\hat{L}}\,,\enskip k\mapsto k'=k\mod 2\pi/\hat{L}\,,\qquad P'_{s,k'}=\bigoplus_{k\mapsto k'}P_{s,k}\,.    
\end{equation*}
As a result, $\rk P'=m\rk P$, $\ch P'=\ch P$. Likewise, by replacing $T$ by $\hat{T}\coloneqq nT$, the period repeats $n$ times and the fibers get repeated along (by way of a pull-back), 
\begin{equation*}
S^1_{\hat{T}}\rightarrow S^1_{T}\,,\enskip s'\mapsto s=s'\mod T\,,\qquad P'_{s',k}=P_{s,k}\,,    
\end{equation*}
the result this time being $\rk P'=\rk P$, $\ch P'=n\ch P$. It remains to prove (\ref{24b}). We shall do so without resorting to the Bloch trivialization, and thus relying on (\ref{12}).\\

The potential in the moving frame $\hat{F}$ is (\ref{transf potential}), i.e.
\begin{equation}\label{13a}
\hat{V}=\mathcal{T}_{-vs} V \mathcal{T}_{vs}\,,
\end{equation}
where $\mathcal{T}_a$ is the translation (\ref{1}). The Fermi projections transform alike, $\hat{P}(s)=\mathcal{T}_{-vs}P(s) \mathcal{T}_{vs}$, and so do the projections acting on $E_{s,k}$:
\begin{equation}\label{13b}
\hat{P}(s, k)=\mathcal{T}_{-vs} P(s, k) \mathcal{T}_{vs}\,,
\end{equation}
cf. (\ref{3}). That in turn implies
\begin{align*}
& [\hat{P}, x]=\mathcal{T}_{-v s}[P, x] \mathcal{T}_{v s}\,, \\
& \partial_s \hat{P}=\mathcal{T}_{-vs}(\partial_s P-\i v[P,p]) \mathcal{T}_{vs} \,.
\end{align*}
The Chern numbers of the bundles $\hat{P}$ and $P$ determined by the two frames thus differ by
\begin{equation}\label{14}
\ch\hat{P}-\ch P=-\frac{\i v}{2 \pi} \oint ds \oint dk \tr(P[[P,x],[P, p]])\,,
\end{equation}
where we first got rid of $\mathcal{T}_{vs}$ by cyclicity when using \cref{12} on $\ch\hat{P}$. The trace is over $E_{s, k}$, still.
We next state a few identities:
\begin{equation}\label{15}
P[[P, x],[P, p]] P=[PxP, PpP]-P[x,p] P\,,
\end{equation}
with all three terms acting fiberwise on $E_{s,k}$;
\begin{align}
& \tr_{E_{s,k}}[PxP, PpP]=\i\partial_k \tr_{E_{s, k}}PpP\,; \label{16} \\
& \tr_{E_{s,k}} P[x, p]P=\i \tr_{E_{s, k}} P\,. \label{17}
\end{align}
Before proving the identities, we apply them to (\ref{14}), so as to obtain
$$
\ch \hat{P} - \ch P=-\frac{v}{2 \pi} \oint ds \oint dk \rk  P= -\rk  P\,.
$$
Indeed, for the first equality, we used that (\ref{16}) is a total derivative and that $\tr_{E_{s,k}}P=\rk P$ in (\ref{17}); for the second one, that 
\begin{equation}\label{periods ratio}
(v/ 2\pi) T (2\pi /L)=1\,,   
\end{equation} 
by (\ref{base case}). That proves the second equation (\ref{24a}). It remains to show the identities (\ref{15}-\ref{17}) above. The first one follows from
$$
P[P,x][P, p] P=PxPpP- PxpP
$$
and in turn from $P[P,x]=P x(1-P)$, $[P, p]P=-(1-P)pP$. The l.h.s. of (\ref{15}) is seen to act fiberwise by (\ref{4}, \ref{7}); the second term on the r.h.s. does too by $[x, p]= -\i  $, which by the way proves (\ref{17}). Hence, the first term also acts fiberwise, which could be seen independently along the lines used to show (\ref{7}). Just as $x$ did not fiber in that context, neither does $PxP$ here, but $[PpP,PxP]$ will. Indeed, by (\ref{5}, \ref{6}) we have
$$
\mathcal{T}_a \e^{\i  \varphi PxP}=\e^{\i  \varphi P(x-a) P}\mathcal{T}_a=\e^{-\i  \varphi aP} \e^{\i  \varphi PxP} \mathcal{T}_a\,, \qquad(a \in L\mathbb{Z})\,.
$$
That shows that $\e^{\i \varphi PxP}: P_{s,k} \rightarrow P_{s, k+\varphi}$ and
\begin{equation*}
A(\varphi)\coloneqq \e^{-\i \varphi PxP} PpP \e^{\i \varphi PxP}: P_{s,k} \rightarrow P_{s,k}    
\end{equation*}
and even $E_{s,k} \rightarrow E_{s,k}$, since the action on $P_{s, k}^{\perp} \subset E_{s, k}$ is annihilation. Finally, the same holds for $[PpP, PxP]$, with
\begin{equation*}
\i  \tr_{E_{s,k}}[PpP, PxP]=\frac{d}{d\varphi} \tr_{P_{s,k}} A(\varphi)\bigg|_{\varphi=0}=\frac{d}{d \varphi} \tr_{P_{s, k+\varphi}}(PpP)\bigg|_{\varphi=0}=\frac{\partial}{\partial k} \tr_{P_{s,k}}(PpP)\,.
\end{equation*}
\end{proof}
\begin{remark}[Gauss-Codazzi]
Let $E\rightarrow M$ be a Hermitian vector bundle; let $P\subset E$ denote a subbundle, as well as the corresponding orthogonal projection; let $\nabla$ be a Hermitian connection on $E$ and let $\nabla^{(P)}=P\nabla P$ be the induced connection; let $A$ be the second fundamental form, $\nabla=\nabla^{(P)}+A$, i.e.
\begin{equation*}
A=P^{\perp}\nabla P\,;    
\end{equation*}
let finally $R$ and $R^{(P)}$ be the curvatures of $\nabla$, $\nabla^{(P)}$, respectively. They are related by the Gauss-Codazzi equation, cf. \cite[Eq.$\,$1.6.12]{kobayashi}, 
\begin{equation}\label{gauss codazzi}
PRP=R^{(P)}-A^*\wedge A\,.    
\end{equation}
A similar construction, but devoid of base space $M$, is as follows. Let $\mathcal{H}$ be a Hilbert space and
\begin{equation*}
\mathcal{H}\rightarrow\mathcal{H}\,,\qquad \psi\mapsto e^{Xt}\psi\,,\qquad (t\in\rr)   
\end{equation*}
be the ``parallel transport'' generated by the anti-Hermitian operator $X=-X^*$. Thus $\nabla$ is given by $\nabla_X\psi=X\psi$. Let the orthogonal projection $P$ define a subspace of $\mathcal{H}$, also denoted $P$. Then the induced transport is generated by $\nabla^{(P)}$ with
\begin{equation*}
\begin{gathered}
\nabla_X=\nabla_X^{(P)}+A(X)\,,\\
A(X):P\rightarrow P^{\perp}\,,\qquad A(X)= P^{\perp}XP= [X,P]P\,.    
\end{gathered}
\end{equation*}
We have
\begin{align*}
& A^*(X): P^{\perp}\rightarrow P\,, \qquad A^*(X)=P[X,P]\,, \\
& R(X, Y)=\nabla_X \nabla_Y-\nabla_Y \nabla_X-\nabla_{[X, Y]}=0\,, \\
& R^{(P)}(X, Y)=[P X P, P Y P]-P[X, Y] P
\end{align*}
and (\ref{gauss codazzi}) reads
\begin{equation*}
[P X P, P Y P]-P[X, Y] P=P[[P, X],[P, Y]] P\,.
\end{equation*}
For $\mathcal{H}=L^2(\rr)$, $X=\i x$, $Y=\i p$, this is (\ref{15}).
\end{remark}
We next present an alternative proof of \cref{theorem1} that makes use of the Bloch trivialization, rather than foregoing it. (For a broad discussion of that trivialization we refer to \cite{gawedzki}.) We recall that it is not global on $\mathbb{T}=S_s^{1} \times S_k^{1}$, but rather on its lift $S_s^{1} \times \rr_k$. Let $G=2 \pi /L$ be the primitive dual lattice vector. Fibers $E_k \cong E_{k+G}$ are both represented by $F$, though differently, and in fact by $F \rightarrow F$,
$$
u_{k+G}(x)=\e^{-\i  G x} u_k(x)\,,
$$
cf. (\ref{10}). Hence
\begin{equation*}
P(s, k +G)=\e^{-\i Gx} P(s, k) \e^{\i Gx}\,,
\end{equation*}
where the conjugating unitary $\e^{-\i Gx}$ is independent of $k$. In passing, we remark that the integrand of (\ref{chern}) is nonetheless defined globally on $\mathbb{T}$, precisely due to this independence. The translation operator $\mathcal{T}_a$, cf. (\ref{1}), is given in the present trivialization by 
\begin{equation}\label{alt translations}
\widetilde{\mathcal{T}}_a : F \rightarrow F\,, \qquad \widetilde{\mathcal{T}}_a=\mathcal{T}_a \e^{-\i ka}\,, \qquad (a\in\mathbb{R})\,,
\end{equation} 
as seen from (\ref{10}) and $(\widetilde{\mathcal{T}}_a u)(x) =$ $u(x-a) \e^{-\i ka}$. We note that even lattice translations, $\widetilde{\mathcal{T}}_a=e^{-\i ka}$, $(a\in L\mathbb{Z})$, act differently on fibers $E_k$ and $E_{k+G}$. The generator of $\widetilde{\mathcal{T}}_a$ is $\widetilde{p}=p+k$ and it satisfies
\begin{equation}\label{19}
p+k+G = \e^{-\i Gx}(p+k) \e^{\i Gx}\,.    
\end{equation}
As a final preliminary, we recall \cite[Lm. 3.2]{cmp/1104180308} (see also \cite[Eq. (8)]{gawedzki}). 
\begin{lemma}
Let $P$ be a function on $\rr^n$ with values in the projections of finite rank in some Hilbert space. Let $U$ be a function on $\rr^n$ with values in the invertible linear operators on this Hilbert space. Assuming $P$ and $U$ to be (strongly) differentiable, the following identity holds:
\begin{equation}\label{20}
\tr P_U dP_U\wedge dP_U - \tr P dP\wedge dP = d(\tr P U^{-1}dU)  
\end{equation}
where $P_U\coloneqq U P U^{-1}$.
\end{lemma}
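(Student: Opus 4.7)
My plan is to introduce the operator-valued Maurer-Cartan 1-form $\alpha:=U^{-1}\,dU$ and to reduce the claim to three algebraic inputs: the projector identity $P\,dP\,P=0$ (equivalent to $P\,dP=dP\,(1-P)$, which comes from differentiating $P^2=P$), the Maurer-Cartan equation $d\alpha=-\alpha\wedge\alpha$, and the Gauss-Codazzi-type identity \cref{15} already established in the proof of \cref{theorem1}.

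First I would differentiate $P_U=UPU^{-1}$ using $d(U^{-1})=-U^{-1}dU\,U^{-1}$, and factor the conjugation outward to get $dP_U=U\beta U^{-1}$ with $\beta:=dP+[\alpha,P]$. Cyclicity of the trace, applied with $P_U=UPU^{-1}$ in front, then yields
\begin{equation*}
\tr P_U\,dP_U\wedge dP_U-\tr P\,dP\wedge dP=\tr P\bigl(dP\wedge[\alpha,P]+[\alpha,P]\wedge dP+[\alpha,P]\wedge[\alpha,P]\bigr),
\end{equation*}
so it remains to identify the right-hand side with $d(\tr P\,\alpha)$. I split this into the two cross terms (linear in $\alpha$) and the purely quadratic one.

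For the cross terms, I would use $P\,dP\,P=0$ together with cyclicity and the sign rule $\tr(A\wedge B)=-\tr(B\wedge A)$ for operator-valued 1-forms. A short cancellation, written transparently as $PdP=dP-dP\,P$, collapses $\tr P\bigl(dP\wedge[\alpha,P]+[\alpha,P]\wedge dP\bigr)$ to $\tr(dP\wedge\alpha)$. For the quadratic term I would evaluate pointwise at $(X,Y)$: the integrand is $\tr P\bigl[[\alpha(X),P],[\alpha(Y),P]\bigr]$, and after inserting $P=P^2$ to sandwich the commutator, \cref{15} (applied with $\alpha(X),\alpha(Y)$ in place of $x,p$) rewrites it as $\tr[P\alpha(X)P,P\alpha(Y)P]-\tr P[\alpha(X),\alpha(Y)]$. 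The first piece is the trace of a commutator and so vanishes, while the second assembles into $-\tr(P\alpha\wedge\alpha)$ as an operator-valued 2-form.

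Finally, using $d\alpha=-\alpha\wedge\alpha$ one has
\begin{equation*}
d(\tr P\,U^{-1}dU)=\tr(dP\wedge\alpha)+\tr(P\,d\alpha)=\tr(dP\wedge\alpha)-\tr(P\alpha\wedge\alpha),
\end{equation*}
which coincides with the sum obtained in the previous paragraph. The main obstacle is the bookkeeping of wedge signs, cyclicity, and the interplay of operator multiplication with the differential-form structure when collapsing the cross terms; once this is handled, the quadratic piece is essentially \cref{15} and the conclusion is immediate from Maurer-Cartan.
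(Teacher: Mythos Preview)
Your proof is correct. The paper does not supply its own proof of this lemma; it merely recalls the result from the literature (\cite[Lm.~3.2]{cmp/1104180308}, with a pointer also to \cite[Eq.~(8)]{gawedzki}). Your argument is therefore a self-contained verification that the paper forgoes.

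A couple of remarks on the details. Your use of \cref{15} is legitimate: although in the body of the paper that identity is written with the specific operators $x,p$, its proof uses only that $P$ is a projection, and the Gauss--Codazzi remark immediately following makes the general form $P[[P,X],[P,Y]]P=[PXP,PYP]-P[X,Y]P$ explicit. The vanishing of $\tr[P\alpha(X)P,P\alpha(Y)P]$ relies on $P$ having finite rank, which is precisely the hypothesis of the lemma. Your handling of the cross terms via $P\,dP\,P=0$ and the graded cyclicity $\tr(A\wedge B)=-\tr(B\wedge A)$ for operator-valued $1$-forms is clean; the collapse to $\tr(dP\wedge\alpha)$ indeed uses only $P\,dP+dP\,P=dP$.
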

\begin{proof}[Proof of \cref{theorem1} (alternative)]
The proof is placed in the context of the Bloch trivialization. Eq. (\ref{13a}) holds true with $\widetilde{T}$ in place of $T$, cf. (\ref{alt translations}), and so does (\ref{13b}), with the proviso that $P$ and $\hat{P}$ are now maps on $F$. We then use (\ref{20}) on $\rr^2 \ni(s, k)$ and apply it to $U=\widetilde{\mathcal{T}}_{-vs}$, so that the l.h.s. (times $\i /2\pi$) is the integrand of $\ch\hat{P}-\ch P$. The r.h.s. does not descend to an exact 1-form on the torus $\mathbb{T}$, so we cannot conclude that the integral thereof vanishes. Instead, we will apply Stokes' theorem to the rectangle $R=[0, T] \times[0, G] \ni(s,k)$. By $d \widetilde{\mathcal{T}}_a=-\i (p da + d(ka)) \widetilde{\mathcal{T}}_a$ we find
\begin{equation*}
U^{-1} dU=\widetilde{\mathcal{T}}_{-vs}^* d\widetilde{\mathcal{T}}_{-vs}=\i v((p+k) ds +s dk)\,.
\end{equation*}
In view of (\ref{13b}, \ref{19}), the contributions coming from $ds$ and the segments $k=0, G$ of $\partial R$ vanish. We are left with
\begin{equation*}
\ch \hat{P}-\ch P=- \frac{v}{2\pi} \int_0^G dk \,s \tr P(s,k)\Big|_{s=0}^{s=T}= - \rk   P\,,
\end{equation*}
again by (\ref{periods ratio}).
\end{proof}
\begin{remark}
In (\ref{20}), $\operatorname{tr}(PU^{-1}dU)$ is the Chern-Simons 1-form of $\operatorname{tr}(P_U dP_U\wedge dP_U)$ in a local chart over which $P$ is trivialized \cite[Sect. 11.5]{nakahara}.
\end{remark}
\section{Analysis in position space}\label{section 3}
The Hamiltonian acting on $L^2(\rr_x , \cc^n)$ is 
\begin{equation}\label{hamiltonian pos}
H(s)=-\frac{d^2}{dx^2}+V(x,s)\,, \qquad (s\in S^1)\,,
\end{equation} 
where the potential $V(x,s)=V(x,s)^{*}$ is parametrized by the adiabatic time $s=\varepsilon t$, with $\varepsilon\ll 1$. Let $V(x,s)$ be periodic in time and space with periods $T$ and $L$, cf. (\ref{periodicity1}, \ref{periodicity2}). We shall next present a formulation of Galilei covariance that does not rely on Bloch theory. It rather rests on a framework \cite{ortelli}, to be reviewed below, that in itself requires time but not space periodicity of the potential $V(x,s)$, besides the gap condition (\ref{fermi level}). The analysis actually relies on position space. In the present context, time periodicity must apply in both frames, with periods $T$ and $\hat{T}$. 

We assume $T$ and $\hat{T}$ to be commensurate.
\begin{lemma}
Then $V$ is spatially periodic. In particular, (\ref{base case}) may be assumed.
\begin{proof}
As remarked in the proof of \cref{lemma 0}, the translation $(v\hat{T},\hat{T})$ is a spacetime symmetry of $V$. By $mT=m'\hat{T}\eqqcolon \widetilde{T}$ we see that $m'(v\hat{T},\hat{T})-m(0,T)=(v\widetilde{T},0)$ is a spatial period.
\end{proof}
\end{lemma}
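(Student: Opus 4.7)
The lemma has two assertions: first that spatial periodicity of $V$ follows from having two commensurate temporal periods (one in each frame), and second that once that is known we may as well reduce to the normalized situation (\ref{base case}). My plan would be to take the author's remark from \cref{lemma 0} as the starting input, namely that a number $\hat T$ is a time period of $\hat V$ iff $(v\hat T,\hat T)$ is a symmetry of $V$ as a function on spacetime. Combined with $(0,T)$, which is a symmetry of $V$ by hypothesis (\ref{periodicity1}), this gives us two generators in the additive group of spacetime translations preserving $V$.

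The next step is purely group-theoretic. Commensurability of $T$ and $\hat T$ yields integers $m,m'$ with $mT=m'\hat T$, which I would call $\widetilde T$. The integer combination
\begin{equation*}
m'(v\hat T,\hat T)-m(0,T)=(v\widetilde T,\,0)
\end{equation*}
is then itself a symmetry of $V$, i.e. a pure spatial translation leaving $V$ invariant. Hence $V$ is spatially periodic, with period $v\widetilde T$ (or a divisor thereof). This proves the first claim.

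For the second claim I would argue that we may redefine the fundamental periods. Set $T_{\mathrm{new}}:=\widetilde T$, $L_{\mathrm{new}}:=v\widetilde T$, and analogously $\hat T_{\mathrm{new}}:=\widetilde T$, $\hat L_{\mathrm{new}}:=v\widetilde T$. Each of these is a (not necessarily fundamental) period in the appropriate frame, and by construction they satisfy $\hat T_{\mathrm{new}}=T_{\mathrm{new}}$, $\hat L_{\mathrm{new}}=L_{\mathrm{new}}$, and $v T_{\mathrm{new}}=L_{\mathrm{new}}$, which is exactly (\ref{base case}). This is precisely the same kind of period-renormalization already used in the passage from (\ref{gen case})--(\ref{periods}) to (\ref{base case}), so the reduction is legitimate.

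There is essentially no obstacle; the only thing to watch is that one should not insist on $T$, $\hat T$, $L$ being fundamental after the reduction, exactly as was already agreed after \cref{lemma 0}. In particular, no analytic or spectral input is needed — the argument is lattice-theoretic, and the commensurability hypothesis is used solely to guarantee that the $\mathbb{Z}$-span of $(0,T)$ and $(v\hat T,\hat T)$ contains a nonzero vector of the form $(\ast,0)$.
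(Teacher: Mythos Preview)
Your proof is correct and follows essentially the same route as the paper: identify $(v\hat T,\hat T)$ and $(0,T)$ as spacetime symmetries of $V$, use commensurability $mT=m'\hat T=:\widetilde T$ to form the integer combination $m'(v\hat T,\hat T)-m(0,T)=(v\widetilde T,0)$, and read off a nontrivial spatial period. Your added paragraph spelling out the period-renormalization behind the ``in particular'' clause is a welcome elaboration of what the paper leaves implicit, but the core argument is identical.
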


In the setting (\ref{hamiltonian pos}), at first we fix $s\in S^1$ and $z\in\cc$ outside of the spectrum, $z\notin\sigma(H(s))$. We then consider solutions $\psi=\psi(x)$, $\psi(x)\in\cc^n$ (column vectors) of the Schrödinger equation
\begin{equation}\label{schroed1}
-\psi^{\prime \prime}(x)+V(x, s) \psi(x)=z \psi(x)\,.   
\end{equation}
For any $x_0\in\rr$, any initial values $\Psi_0=(\psi_0,\psi'_0)\in\cc^n\oplus\cc^n$ uniquely determine a solution $\psi(x)$ of (\ref{schroed1}) having $\psi(x_0)=\psi_0$, $\psi'(x_0)=\psi'_0$. We denote that by
\begin{equation}\label{initial cond}
(\Psi_0,x_0)\leftrightarrow \psi\,.    
\end{equation}
We also consider the adjoint equation
\begin{equation}\label{schroed2}
-\widetilde{\psi}^{\prime \prime}(x)+\widetilde{\psi}(x) V(x, s)=z \widetilde{\psi}(x)
\end{equation}
and its solutions $\widetilde{\psi}=\widetilde{\psi}(x)$, $\widetilde{\psi}(x)\in\cc^n$ (row vectors). Likewise, we write
\begin{equation*}
(\widetilde{\Psi}_0,x_0)\leftrightarrow \widetilde{\psi}\,.    
\end{equation*}
For any (differentiable) functions $\psi(x)$, $\widetilde{\psi}(x)$, we have their Wronskian
\begin{equation}\label{wronsk}
W(\widetilde{\psi}, \psi ; x)\coloneqq\widetilde{\psi}(x) \psi^{\prime}(x)-\widetilde{\psi}^{\prime}(x) \psi(x)\,,    
\end{equation}
which defines a function $\rr\rightarrow\cc$, $x\mapsto W(\widetilde{\psi},\psi;x)$. If $\psi,\widetilde{\psi}$ are solutions of (\ref{schroed1}, \ref{schroed2}), then the function is constant and simply denoted $W(\widetilde{\psi},\psi)$. We emphasize the symplectic form on $\cc^n\oplus\cc^n$,
\begin{equation}\label{bilinear form}
\omega(\widetilde{\Psi},\Psi)\coloneqq\widetilde{\psi}\psi'-\widetilde{\psi}'\psi\,, \qquad (\Psi=(\psi,\psi')\,, \enskip \widetilde{\Psi}=(\widetilde{\psi},{\widetilde{\psi}}^{\prime}))\,.    
\end{equation}
We remark that $\omega$ is a non-degenerate bilinear form $\cc^{2n}\times\cc^{2n}\rightarrow\cc$. Moreover, \cref{wronsk} is restated as
\begin{equation}\label{wronsk as bilinear}
W(\widetilde{\psi},\psi;x)=\omega(\widetilde{\Psi}(x),\Psi(x))   
\end{equation}
for any functions $\psi$, $\widetilde{\psi}$ and $\Psi(x)=(\psi(x),\psi'(x))$, $\widetilde{\Psi}(x)=(\widetilde{\psi}(x),\widetilde{\psi}'(x))$. In particular,
\begin{equation}\label{3.7'}
W(\widetilde{\psi},\psi)=\omega(\widetilde{\Psi},\Psi)\,,    
\end{equation}
i.e. the Wronskian of two  solutions of the paired Schrödinger equations is the symplectic form evaluated on their initial values at any point.\\  

The following Lemma is about solutions that decay at one end of the line $\rr_x$.
\begin{lemma}\label{lemma1}
Let 
\begin{align}
\label{lemma1 eq1}
V^+_{(z,s,x_0)} &=\{\Psi\in\cc^n\oplus\cc^n\mid \psi \text{ as in (\ref{schroed1}) is $L^2$ at $x=+\infty$}\}\,,\\
\nonumber
\widetilde{V}^-_{(z,s,x_0)} &=\{\widetilde{\Psi}\in\cc^n\oplus\cc^n\mid \widetilde{\psi} \text{ as in (\ref{schroed2}) is $L^2$ at $x=-\infty$}\}\,.
\end{align}
Then
\begin{enumerate}[label=(\roman*), wide]
\item $V^+_{(z,s,x_0)}$, $\widetilde{V}^-_{(z,s,x_0)}$ are unchanged under $x_0\rightarrow x_0+L$.
\item $\dim V^+_{(z,s,x_0)}=n$, $\dim \widetilde{V}^-_{(z,s,x_0)} = n$.
\item The bilinear form $\omega$, cf. (\ref{bilinear form}), remains non-degenerate upon restriction to $\widetilde{V}^-_{(z,s,x_0)}\times V^+_{(z,s,x_0)}$.
\end{enumerate}
\end{lemma}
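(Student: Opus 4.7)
For part (i), the plan is to exploit the spatial periodicity $V(x+L,s)=V(x,s)$ directly. Given $\Psi\in\cc^n\oplus\cc^n$, let $\psi$ denote the solution of (\ref{schroed1}) with initial data $\Psi$ at $x_0$, and let $\phi$ denote the solution with the same initial data at $x_0+L$. By periodicity of $V$ and uniqueness of the Cauchy problem, $\phi(x)=\psi(x-L)$; translation by $L$ preserves $L^2$-behavior at $+\infty$, so $\Psi\in V^+_{(z,s,x_0+L)}$ iff $\Psi\in V^+_{(z,s,x_0)}$. The argument for $\widetilde V^-$ is identical.

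For part (ii), I would introduce the monodromy matrix $M=M(z,s):\cc^{2n}\to\cc^{2n}$ that propagates initial data by one period $L$. Constancy of the Wronskian, cf.\ (\ref{3.7'}), implies that $M$ preserves the symplectic form $\omega$; in particular its spectrum is invariant under $\mu\mapsto 1/\mu$. By Floquet--Bloch theory, $z\in\sigma(H(s))$ exactly when $M(z,s)$ has an eigenvalue on the unit circle (such an eigenvector seeds a bounded generalized eigenfunction of Bloch type). The assumption $z\notin\sigma(H(s))$ therefore splits the $2n$ eigenvalues of $M$ (counted with algebraic multiplicity) into $n$ inside and $n$ outside the open unit disk. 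A generalized eigenvector for $|\mu|<1$ produces a solution decaying geometrically along $x_0+L\mathbb{Z}$, which combined with boundedness of the propagator over one period yields $L^2$-decay at $+\infty$; any admixture from $|\mu|>1$ rules out $L^2$-behavior there. Hence $\dim V^+=n$, and applying the same analysis to the adjoint equation gives $\dim\widetilde V^-=n$.

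For part (iii), I would turn non-degeneracy into a dimension count in $(\cc^{2n},\omega)$. The key observation is that whenever $\psi$ solves (\ref{schroed1}) and $\widetilde\psi$ solves (\ref{schroed2}) with both $L^2$ at $+\infty$, the Wronskian $W(\widetilde\psi,\psi)$ vanishes: the equations promote the $L^2$-bounds on $\psi,\widetilde\psi$ to $L^2$-bounds on $\psi',\widetilde\psi'$ near $+\infty$, so $\widetilde\psi\psi'-\widetilde\psi'\psi$ is $L^1$ there, and together with constancy this forces $0$. Introducing $\widetilde V^+_{(z,s,x_0)}$ in the obvious manner (dimension $n$ by (ii)) and the $\omega$-annihilator $(V^+)^\circ=\{\widetilde\Psi\in\cc^{2n}:\omega(\widetilde\Psi,\Psi)=0,\ \forall\Psi\in V^+\}$, this shows $\widetilde V^+\subseteq(V^+)^\circ$, with equality by $\dim(V^+)^\circ=2n-n=n$. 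Degeneracy of $\omega|_{\widetilde V^-\times V^+}$ would thus exhibit a nonzero element in $\widetilde V^-\cap\widetilde V^+$, i.e.\ a solution of (\ref{schroed2}) in $L^2(\rr)$; taking its conjugate transpose and using $V=V^*$ would give an $L^2$-eigenfunction of $H(s)$ with eigenvalue $\bar z$, contradicting $z\notin\sigma(H(s))$ together with self-adjointness of $H(s)$.

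The main obstacle I anticipate is (ii), which leans on the matrix-valued Floquet--Bloch characterization of $\sigma(H(s))$ via the monodromy spectrum, together with the symplectic pairing of its eigenvalues. Once (ii) is available, (i) is a one-line translation argument and (iii) reduces to the symplectic dimension count above.
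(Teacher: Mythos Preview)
Your argument is correct throughout, but your route for (ii) and (iii) differs from the paper's in instructive ways.

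For (ii), the paper does \emph{not} invoke Floquet theory. Instead it appeals to deficiency indices: the minimal operator $H_+$ on $L^2(\rr_+,\cc^n)$ has indices $\langle n,n\rangle$, so $\dim\ker(z-H_+^*)=n$ for $\Im z\neq 0$, and a stability argument (Reed--Simon, Thm.~X.1) extends this to all $z\notin\sigma(H)$. This proof never uses the spatial periodicity of $V$; the dimension count holds for any potential with a gap at $z$. Your monodromy argument is equally valid but trades generality for concreteness: it makes the exponential dichotomy explicit and identifies $V^+$ as the stable generalized eigenspace of $M$, at the cost of being tied to the periodic setting.

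For (iii), both proofs hinge on the same two facts---$\omega$ vanishes on pairs of solutions decaying at the same end, and a nonzero $L^2(\rr)$ solution would contradict $z\notin\sigma(H)$---but are organized differently. The paper first records $V^+\oplus V^-=\cc^{2n}$ (from (ii) and $V^+\cap V^-=\{0\}$), then shows that any $\widetilde\Psi\in\widetilde V^-$ annihilating $V^+$ automatically annihilates $V^-$ as well (evaluate the Wronskian at $x=-\infty$), hence all of $\cc^{2n}$. Your version instead identifies the full annihilator $(V^+)^\circ=\widetilde V^+$ and reduces to $\widetilde V^-\cap\widetilde V^+=\{0\}$ via the conjugate-transpose trick. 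The paper's path is marginally shorter since it avoids introducing $\widetilde V^+$; yours makes the symplectic linear algebra more visible.
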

The proof is given in \cref{app b}.\\

Some vector bundles now arise naturally. Let first
\begin{equation}\label{3torus}
\mathbb{T}_3=\gamma\times (\rr/T\mathbb{Z})\times(\rr/L\mathbb{Z})\ni (z,s,x_0)    
\end{equation}
be the 3-torus obtained from: 
\begin{itemize}[leftmargin=*]
\item A contour $\gamma\subset\cc$ encircling the part of the spectrum $\sigma(H(s))$ that is below the Fermi level, for all $s$;
\item A circle $\rr/T\mathbb{Z}\ni s$ for time
\item A circle $\rr/L\mathbb{Z}\ni x$ for space.
\end{itemize}
\begin{figure}[H]
\centering
\input{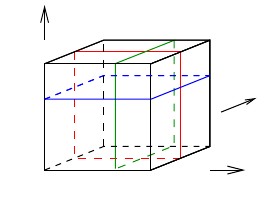_t}
\caption{ The 3-torus $\mathbb{T}_3$, cf. (\ref{3torus}), and its 2-dimensional
subtori $\mathbb{T}_x$, $\mathbb{T}_s$, and $\mathbb{T}_z$, obtained by
fixing the value of one of the three coordinates.}
\label{figure 3torus}
\end{figure}
A preliminary and trivial vector bundle is $\mathbb{T}_3\times(\cc^n\oplus\cc^n)$ and it comes into two copies, denoted $E$ and $\widetilde{E}$, in which $\cc^n$ is viewed as consisting of column and row vectors, respectively. Given a potential $V=V(x,s)$ as in (\ref{hamiltonian pos}), the fiber of $E$ at $(z,s,x_0)$ is identified by (\ref{initial cond}) with the vector space of initial data. Two vector bundles 
\begin{equation}\label{3.10a}
P\subset E\,,\qquad \widetilde{P}\subset\widetilde{E}
\end{equation}
are then introduced as subbundles having fibers $V^+_{(z,s,x_0)}$ and $\widetilde{V}^-_{(z,s,x_0)}$, respectively. They are well-defined by (i) of \cref{lemma1}.\\

The bundle $E$ may be restricted to the 2-torus $\mathbb{T}_x=\gamma\times S^1\times \{x\}$, obtained from $\mathbb{T}$ by fixing $x$ to any value, and along that also the subbundle $P$, resulting in $P_x$. Likewise, the restriction to $\mathbb{T}_s=\gamma\times\{s\}\times\rr/L\mathbb{Z}$, for any fixed $s$, yields $P_s$ (no notational confusion between $P_x$ and $P_s$ will occur). We retain
\begin{equation}\label{3.8a}
P_x=P\mid_{\mathbb{T}_x}\,, \qquad P_s=P\mid_{\mathbb{T}_s}\,.    
\end{equation}
For definiteness, $\mathbb{T}_{x}$ and $\mathbb{T}_{s}$ shall be given the orientations $dz \wedge ds$ and $dx \wedge dz$, respectively. Clearly, the Chern numbers of the two bundles are independent of $x$ and $s$. 
\begin{proposition}[Charges]\label{prop2}
\begin{enumerate}[label=(\roman*), wide]
\item The charge transport during a temporal period is
\begin{equation*}
Q=\ch P_x\,, \qquad (\text{any }x)\,.    
\end{equation*}
\item The charge residing in a spatial period is
\begin{equation*}
N=\ch P_s\,, \qquad (\text{any }s)\,.    
\end{equation*}
\end{enumerate}
\end{proposition}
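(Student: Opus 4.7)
The idea is to bridge the solution bundle $P$ with the Fermi projection through the resolvent contour integral
$$
P_{\mathrm{Fermi}}(s) \,=\, -\frac{1}{2\pi\i}\oint_\gamma (H(s)-z)^{-1}\,dz
$$
together with the Wronskian representation of the Green-function kernel, whose regularity on all of $\gamma$ is ensured by the non-degeneracy of $\omega$ on $\widetilde V^-\times V^+$, Lemma~\ref{lemma1}(iii). The key is that $P_x$ and $P_s$ are two different 2-dimensional slices of the same 3-dimensional bundle $P$, and each slice can be matched to a different physical quantity (transport, resp. density) by tracing back to the resolvent.

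\textbf{Part (i).} For fixed $x$ I would match $\ch P_x$ with the Chern number of the Fermi bundle of Section~\ref{section 2}, which equals $Q$ by \eqref{chern}. Choose smooth local frames of $P_x$ and $\widetilde P_x$ over $\mathbb{T}_x=\gamma\times S^1_T$ and use the Wronskian pairing to induce a Hermitian connection on $P_x$; then compute its curvature 2-form. The $z$-integration on $\gamma$ converts to a $k$-integration via the Floquet correspondence $z\mapsto k(z)$ which records the complex quasi-momenta of the decaying modes for $z\notin\sigma(H(s))$, and after this change of variable the integrand becomes the Bloch curvature of the Fermi projection. This is essentially the construction of \cite{ortelli}, which I would invoke rather than reproduce.

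\textbf{Part (ii).} For fixed $s$ the bundle $P_s$ over $\mathbb{T}_s=\gamma\times S^1_L$ has rank-$n$ fibers inside $\cc^{2n}$, transported along $x_0\in[0,L]$ by the transfer matrix; the loop closes at $x_0=L$ since $V^+_{(z,s,0)}$ is the contracting invariant subspace of the monodromy $M_L=M(0,L;z,s)$, of constant dimension $n$ throughout $\gamma$. Let $\Pi^+_{(z,s,x_0)}$ denote the oblique projection onto $V^+$ along $V^-$. Inserting the Wronskian form of $(H(s)-z)^{-1}(x_0,x_0)$ into $\tr\bigl(\Pi^+[\partial_{x_0}\Pi^+,\partial_z\Pi^+]\bigr)$ and integrating over $\mathbb{T}_s$, I would identify $\ch P_s$ with
$$
\int_0^L dx_0\,\Bigl(-\frac{1}{2\pi\i}\oint_\gamma \tr G_z(x_0,x_0;s)\,dz\Bigr) \,=\, \int_0^L \rho(x_0,s)\,dx_0 \,=\, N\,.
$$

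\textbf{Main obstacle.} The technical core lies in (ii): the splitting $\cc^{2n}=V^+\oplus V^-$ degenerates at band edges, so uniformity of the Wronskian/Green-function representation must be maintained while $\gamma$ stays strictly inside the resolvent set. A safe route is to establish the reduction to the density integral above on a contour $\gamma$ well separated from $\sigma(H(s))$, and then use invariance of $\ch P_s$ under continuous deformations of $\gamma$ to bring $\gamma$ arbitrarily close to the occupied spectrum. Careful book-keeping of orientations on $\gamma$ and $S^1_L$, and of the Chern-form sign conventions, is needed to land on $+N$ rather than $-N$; an alternative, more geometric proof via the Maslov-type winding of $V^+$ around band edges is conceivable but seems strictly more work than the density-integral route above.
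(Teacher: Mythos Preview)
Your overall strategy for (ii)---reduce $\ch P_s$ to $\int_0^L\tr P(x,x)\,dx$ via the Green's function---is exactly the paper's route, and your endpoint formula is correct. But the mechanism you describe skips the one step that carries the weight. The paper does not work with the oblique projection $\Pi^+$ directly; it uses the biorthogonal frame formalism (their Appendix~A) to write the curvature of $P_s$ as
\[
\tr\Bigl(W\bigl(\partial_x\widetilde\psi,\partial_z\psi;x\bigr)-W\bigl(\partial_z\widetilde\psi,\partial_x\psi;x\bigr)\Bigr)\,,
\]
and then an algebraic identity, obtained by differentiating the Schr\"odinger equation in $z$, reduces this to $-\tr\widetilde\psi(x)\psi(x)$ plus a total $\partial_z$-derivative. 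Only at that point does the Green's function enter, via $G(x,x;z)=-\psi(x)\widetilde\psi(x)$. Your phrase ``inserting the Wronskian form of $(H-z)^{-1}(x_0,x_0)$ into $\tr(\Pi^+[\partial_{x_0}\Pi^+,\partial_z\Pi^+])$'' reads as if the Green's function already sits inside the curvature, which it does not; the reduction is the actual content.

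Two smaller corrections. For (i), the paper does not pass through any Floquet correspondence $z\mapsto k(z)$; it obtains the analogous Wronskian formula for $\ch P_x$ and cites \cite{ortelli} verbatim for its identification with $Q$. Your description of a $z$-to-$k$ change of variables is not how that reference proceeds and would be delicate to make rigorous (the map is multivalued). Finally, your ``main obstacle'' is not one: $\gamma$ lies in the resolvent set for all $s$ by hypothesis, so the splitting $V^+\oplus V^-$ is uniformly transverse along $\gamma$ (this is Lemma~\ref{lemma1}(iii)), and no limiting argument toward band edges is needed.
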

\begin{proof}
We shall apply \cref{prop appA} to the bundles $P_x$ and $P_s$, cf. (\ref{3.8a}), in each case together with bundles $\widetilde{P}_x$ and $\widetilde{P}_s$ likewise obtained from $\widetilde{P}$. In both cases the bilinear will be the symplectic form (\ref{bilinear form}). Biorthogonal frames $(\widetilde{F},F)$, as in (\ref{appA 4}, \ref{appA 5}), are obtained for $\widetilde{P},P$ as follows: A (local) basis for $P_p$, $p=(z,s,x)$ consists of $n$ linearly independent (column) solutions of (\ref{schroed1}) as in (\ref{lemma1 eq1}) --- collectively one matrix solution $\psi=\psi(x)\in M_n(\cc)$ --- or rather of their initial values at $x$. That is: Data $\Psi=(\psi,\psi')\in M_n(\cc)\oplus M_n(\cc)$ such that 
\begin{itemize}[leftmargin=*]
    \item the corresponding solution $\psi(x)$ is in $L^2$ at $x=+\infty$; 
    \item $\psi a=0, \enskip \psi' a=0 \enskip \Longrightarrow \enskip a=0$.
\end{itemize}
Likewise, a basis $\widetilde{F}$ for $\widetilde{P}_p$ is obtained from $\widetilde{\Psi}=(\widetilde{\psi},\widetilde{\psi}')$. Then (\ref{appA 5}) says
\begin{equation}\label{gauge fix}
\omega(\widetilde{\Psi},\Psi)=\mathds{1}\,,    
\end{equation}
whereby (\ref{wronsk}) has become matrix-valued with entries $\omega(\widetilde{\Psi},\Psi)_{ij}=\omega(\widetilde{\Psi}_i,\Psi_j)$. \cref{appA 7} then leads to
\begin{equation}\label{chern2}
\ch P_x =\frac{\mathrm{i}}{2 \pi}\int_{\gamma\times (\rr/ T\mathbb{Z})} dz\,ds  \tr\Bigr(W\Bigr(\frac{\partial \widetilde{\psi}}{\partial z}, \frac{\partial \psi}{\partial s} ; x\Big)-W\Bigr(\frac{\partial \widetilde{\psi}}{\partial s}, \frac{\partial \psi}{\partial z} ; x\Big)\Big)\,,
\end{equation}
\begin{equation}\label{chern3}
\ch P_s =\frac{\mathrm{i}}{2 \pi}\int_{ (\rr/ L\mathbb{Z})\times\gamma} dx\,dz  \tr\Bigl(W\Bigl(\frac{\partial \widetilde{\psi}}{\partial x}, \frac{\partial \psi}{\partial z} ; x\Bigr)-W\Bigl(\frac{\partial \widetilde{\psi}}{\partial z}, \frac{\partial \psi}{\partial x} ; x\Bigr)\Bigr)\,.
\end{equation}
Indeed, the computation proceeds along (\ref{a9}) with $b=\omega$. In particular, the derivatives seen there are carried over as derivatives in $W(\tilde\psi,\psi;x)$ acting on $\psi$, $\tilde\psi$, but not on $x$, cf. (\ref{wronsk as bilinear}). We observe that the r.h.s. of (\ref{chern2}) has been identified to be equal to $Q$ in \cite[Thm. 1]{ortelli}, but (\ref{chern3}) has not been identified with $N$. We will do so next. More precisely, we claim that, for any $x$, $s$, 
\begin{equation}\label{charge density}
\frac{\mathrm{i}}{2 \pi}\int_{\gamma} dz  \tr\Bigl(W\Bigl(\frac{\partial \widetilde{\psi}}{\partial x}, \frac{\partial \psi}{\partial z} ; x\Bigr)-W\Bigl(\frac{\partial \widetilde{\psi}}{\partial z}, \frac{\partial \psi}{\partial x} ; x\Bigr)\Bigr)=\tr P(x,x)\,,    
\end{equation}
where $P=P(s)$ is the Fermi projection, i.e. the spectral projection in Hilbert space $L^2(\rr_x,\cc^n)$ associated to the part of the spectrum of $H(s)$ encircled by $\gamma$:
\begin{equation*}
P(s)=\frac{\mathrm{i}}{2\pi}\int_{\gamma} dz(H(s)-z)^{-1}\,.    
\end{equation*}
Thus $\tr P(x,x)$ is the charge density at $x$. In particular, (\ref{chern3}, \ref{charge density}) imply
\begin{equation*}
\ch P_s=\int_{\rr/L\mathbb{Z}}dx \tr P(x,x)=N\,,    
\end{equation*}
as claimed.
\end{proof}
\begin{proof}[Proof of (\ref{charge density})]
As a preliminary we note that
\begin{equation}\label{prelim}
\frac{\partial \widetilde{\psi}}{\partial z} \psi''=\frac{\partial \widetilde{\psi}''}{\partial z} \psi+\widetilde{\psi} \psi\,;
\end{equation}
indeed, this follows from (\ref{schroed1}, \ref{schroed2}), i.e.
\begin{equation*}
\psi''=(V-z) \psi\,,\qquad \widetilde{\psi}''=\widetilde{\psi}(V-z)\,,
\end{equation*}
and the derivative w.r.t. $z$ of the latter,
\begin{equation*}
\frac{\partial \widetilde{\psi}''}{\partial z} =\frac{\partial \widetilde{\psi}}{\partial z} (V-z)-\widetilde{\psi}\,.
\end{equation*}
We next expand the expression under the trace in the l.h.s. of (\ref{charge density}). It so equals
\begin{equation*}
\widetilde{\psi}' \frac{\partial \psi'}{\partial z}-\widetilde{\psi}'' \frac{\partial \psi}{\partial z}-\frac{\partial \widetilde{\psi}}{\partial z} \psi''+\frac{\partial \widetilde{\psi}'}{\partial z} \psi' =-\frac{\partial}{\partial z}(\widetilde{\psi}^{\prime \prime} \psi)+\frac{\partial}{\partial z}(\widetilde{\psi}' \psi')-\widetilde{\psi} \psi\,,
\end{equation*}
where we used (\ref{prelim}). Since any complex vector bundle on a circle is trivial, so is the bundle $P$ restricted to $\gamma\times\{(s,x)\}\equiv\gamma$. Hence, the frames $F=\Psi=(\psi,\psi')$ and $\widetilde{F}$ can be chosen to be global on $\gamma$. That implies for the l.h.s. of (\ref{charge density})
\begin{equation}\label{prelim2}
\frac{\mathrm{i}}{2 \pi}\int_{\gamma} dz \tr\Bigl(W\Bigl(\frac{\partial \widetilde{\psi}}{\partial x}, \frac{\partial \psi}{\partial z} ; x\Bigr)-W\Bigl(\frac{\partial \widetilde{\psi}}{\partial z}, \frac{\partial \psi}{\partial x} ; x\Bigr)\Bigr)= -\frac{\mathrm{i}}{2 \pi}\int_{\gamma}dz\operatorname{tr} \widetilde{\psi}(x)\psi(x)\,.    
\end{equation}
It will be shown in \cref{app C} that the Green's function, i.e. the integral kernel of the resolvent $(H(s)-z)^{-1}$, is given by
\begin{equation}\label{green}
G(x,x';z)=-\psi(x)\widetilde{\psi}(x')\,, \qquad (x\geq x')\,.    
\end{equation}
Thus (\ref{prelim2}) equals
\begin{equation*}
\frac{1}{2 \pi \mathrm{i}} \int_\gamma \tr G(x, x)dz =\tr P(x,x)   
\end{equation*}
by cyclicity of the trace. That proves (\ref{charge density}).
\end{proof}
\begin{theorem}[Galilei covariance]\label{theorem2}
Suppose two frames are related by (\ref{boost}, \ref{periods}). Then
\begin{align}
\label{main1 pos}
\ch\Hat{P}_{\Hat{x}} &=n \ch P_x -m\ch P_s\,,\\
\label{main2 pos}
\ch\Hat{P}_{\Hat{s}} &=m \ch P_s
\end{align}
for any $(x,s)$, $(\Hat{x},\Hat{s})$. Here $P,\Hat{P}\subset E$ are the vector bundles introduced following \cref{lemma1} and based on $V,$ $\Hat{V},$ cf. (\ref{transf potential}).
\end{theorem}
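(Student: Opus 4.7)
My plan is to reduce to the base case $m=n=1$ of \eqref{base case}, and then exhibit $\hat P$ as an explicit pullback of $P$ under a self-diffeomorphism of the ambient 3-torus $\mathbb{T}_3$. As in the proof of \cref{theorem1}, redefining $T\to nT$ and $L\to mL$ turns $\mathbb{T}_x$ and $\mathbb{T}_s$ into $n$-fold and $m$-fold covers of themselves, thereby multiplying $\ch P_x$ by $n$ and $\ch P_s$ by $m$, while leaving unchanged the Chern numbers of the respective transverse subtorus. It thus suffices to prove $\ch\hat P_{\hat x}=\ch P_x-\ch P_s$ and $\ch\hat P_{\hat s}=\ch P_s$ under the assumption $vT=L$.

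The key structural observation is that if $\psi(x;z,s)$ solves \eqref{schroed1} for the potential $V$, then $\hat\psi(\hat x;z,s)\coloneqq\psi(\hat x+vs;z,s)$ solves the corresponding equation for $\hat V=V(\,\cdot\,+vs,s)$; decay at $+\infty$ is manifestly preserved, and the initial data of $\hat\psi$ at $\hat x_0$ coincides with those of $\psi$ at $x_0=\hat x_0+vs$. Hence $\hat V^+_{(z,s,\hat x_0)}=V^+_{(z,s,\hat x_0+vs)}$ as subspaces of $\cc^n\oplus\cc^n$ under the initial-data identification, and similarly for $\widetilde V^-$. This means precisely that $\hat P=\Phi^{*} P$ for the self-diffeomorphism
\begin{equation*}
\Phi:\mathbb{T}_3\to\mathbb{T}_3\,,\qquad\Phi(z,s,\hat x_0)=(z,s,\hat x_0+vs\bmod L)\,,
\end{equation*}
which is well defined thanks to $vT=L$.

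Chern numbers being invariant under pullback, the remaining task is to identify the homology classes of $\Phi(\hat{\mathbb{T}}_{\hat s})$ and $\Phi(\hat{\mathbb{T}}_{\hat x})$ in $H_2(\mathbb{T}_3;\mathbb{Z})$. The first is immediate: at fixed $\hat s$, $\Phi$ acts on $\hat{\mathbb{T}}_{\hat s}$ as a rigid translation of the $x$-circle, so $\Phi(\hat{\mathbb{T}}_{\hat s})=\mathbb{T}_{\hat s}$ with matching orientation, whence $\ch\hat P_{\hat s}=\ch P_s$. The second image, $\Sigma_{\hat x}\coloneqq\Phi(\hat{\mathbb{T}}_{\hat x})$, is the ``tilted'' 2-cycle parametrized by $(z,s)$ with $x=\hat x+vs$, whose tangent at each point is spanned by $\partial_z$ and $\partial_s+v\partial_x$. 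Pairing $\Sigma_{\hat x}$ against the basis $(\gamma,c_s,c_x)$ of $H_1(\mathbb{T}_3;\mathbb{Z})$ using the ambient orientation $dz\wedge ds\wedge dx$ gives intersection numbers $(0,-1,1)$ (the sign $-1$ coming from the $3\times3$ determinant $\det(\partial_z,\partial_s+v\partial_x,\partial_s)=-v<0$), so $[\Sigma_{\hat x}]=[\mathbb{T}_x]-[\mathbb{T}_s]$. Integrating the closed Chern form of $P$ over this class yields $\ch\hat P_{\hat x}=\ch P_x-\ch P_s$, completing the reduced case and hence the theorem.

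The step I expect to demand the most care is the sign in the homology identification, since the orientation conventions for $\mathbb{T}_x$ ($dz\wedge ds$) and $\mathbb{T}_s$ ($dx\wedge dz$) have opposite cyclic orders, and the drift component $v\partial_x$ in the $s$-tangent to $\Sigma_{\hat x}$ must be tracked consistently through the intersection calculation. A much shorter alternative, bypassing the topology altogether, is to invoke \cref{prop2} in both frames---giving $\ch P_x=Q$, $\ch P_s=N$, $\ch\hat P_{\hat x}=\hat Q$, $\ch\hat P_{\hat s}=\hat N$---and to combine this with the classical transformation law \eqref{transf}, established from the continuity equation alone; the identities \eqref{main1 pos} and \eqref{main2 pos} then follow at once.
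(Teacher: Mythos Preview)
Your proof is correct and takes a genuinely different route from the paper's. The paper works directly with the Wronskian expressions \eqref{chern2} and \eqref{chern3}: it writes $\hat\psi(z,s,\hat x)=\psi(z,s,\hat x+vs)$, computes $\partial_s\hat\psi=\partial_s\psi+v\partial_x\psi$ and $\partial_z\hat\psi=\partial_z\psi$, substitutes into \eqref{chern2}, and then averages the resulting extra term over $x\in\rr/L\mathbb{Z}$ so as to recognize \eqref{chern3}. Your argument instead packages the same wavefunction relation as a bundle isomorphism $\hat P=\Phi^*P$ for the shear $\Phi(z,s,\hat x_0)=(z,s,\hat x_0+vs)$ of $\mathbb{T}_3$, and then reads off the result from $\Phi_*[\hat{\mathbb{T}}_{\hat x}]=[\mathbb{T}_x]-[\mathbb{T}_s]$ and $\Phi_*[\hat{\mathbb{T}}_{\hat s}]=[\mathbb{T}_s]$ paired against $c_1(P)\in H^2(\mathbb{T}_3)$.

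Your approach is more conceptual: it makes transparent that the covariance law is nothing but the action of the shear on $H_2(\mathbb{T}_3;\mathbb{Z})$, and it explains without computation why $\ch P_z$ never enters (since $\Phi$ acts trivially on the $z$-factor), independently of \cref{prop 3.5}. The paper's approach, by contrast, stays entirely within the Wronskian calculus of \cref{prop appA}, mirroring the structure of the two earlier proofs of \cref{theorem1} and avoiding any appeal to the ambient homology of $\mathbb{T}_3$. Your sign check via intersection numbers with $(c_z,c_s,c_x)$ is correct given the orientations fixed below \eqref{3.8a}; you could equally well obtain $[\Sigma_{\hat x}]=[\mathbb{T}_x]-[\mathbb{T}_s]$ by expanding $\partial_z\wedge(\partial_s+v\partial_x)$ and noting that $\partial_z\wedge\partial_x=-[\mathbb{T}_s]$ in the convention $dx\wedge dz$. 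Your alternative shortcut via \cref{prop2} and \eqref{transf} is also valid, though it routes the bundle statement through the physical charges rather than proving it intrinsically.
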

\begin{remark}
The theorem underpins \cref{transf} in view of \cref{prop2} and it is analogous to \cref{theorem1}, while foregoing Bloch theorem.
\end{remark}
\begin{proof}
As in other approaches, it suffices to consider $n=m=1$. The Galilei transformation acts on wavefunctions as
\begin{equation*}
\mathcal{T}_{-vs}:\psi(x)\mapsto\Hat{\psi}(\Hat{x})=\psi(x)=\psi(\Hat{x}+vs)\,,    
\end{equation*}
and likewise on $\psi'(x)$. In terms of the (trivial) bundle $E$, it acts as the identity between copies of the typical fiber $\cc^n\oplus\cc^n$ placed at $x_0$ and $\Hat{x}_0=x_0-vs$. In terms of the subbundles $P$ and $\Hat{P}$, the map appropriately acts between fibers $\smash{V^+_{(z,s,x_0)}\rightarrow \Hat{V}^+_{(z,s,x_0-vs)}}$. In terms of the frames seen in (\ref{chern2}) when that equation is applied to $\ch\Hat{P}_{\Hat{x}}$, we have 
\begin{equation*}
\Hat{\psi}(z,s,\Hat{x})=\psi(z,s,\Hat{x}+vs)\,,    
\end{equation*}
and thus 
\begin{equation}\label{derivatives new frame}
\frac{\partial\Hat{\psi}}{\partial s}=\frac{\partial\psi}{\partial s} +v \frac{\partial\psi}{\partial x}\,, \qquad \frac{\partial\hat{\psi}}{\partial z}=\frac{\partial\psi}{\partial z}\,,   
\end{equation}
and likewise for $\Hat{\widetilde{\psi}}$. Hence the trace in (\ref{chern2}), or rather the first term under it, becomes
\begin{equation*}
W\Bigl(\frac{\partial\Hat{\widetilde{\psi}}}{\partial z},\frac{\partial\Hat{\psi}}{\partial s};\Hat{x}\Bigr)= W\Bigl(\frac{\partial\widetilde{\psi}}{\partial z},\frac{\partial\psi}{\partial s};x\Bigr) +v W\Bigl(\frac{\partial\widetilde{\psi}}{\partial z},\frac{\partial\psi}{\partial x};x\Bigr)\,.    
\end{equation*}
Since $\ch P_x$, $\ch\Hat{P}_{\Hat{x}}$ are independent of $x$ and $\Hat{x}$ (cf. \cref{prop2}), we may average them over $\rr/L\mathbb{Z}$ without harm. We obtain
\begin{align*}
\ch\Hat{P}_{\Hat{x}}-\ch P_x &=\frac{\mathrm{i}}{2\pi}\int_{\gamma\times (\rr/T\mathbb{Z})} dz\, ds \frac{v}{L}\int_{\rr/L\mathbb{Z}} dx\tr\Bigl(W\Bigl(\frac{\partial\widetilde{\psi}}{\partial z},\frac{\partial\psi}{\partial x};x\Bigr) - W\Bigl(\frac{\partial\widetilde{\psi}}{\partial x},\frac{\partial\psi}{\partial z};x\Bigr) \Bigr)\\ &=-\frac{v}{L}\int_{\rr/T\mathbb{Z}}ds \ch P_s =-\ch P_s
\end{align*}
by (\ref{chern3}), the independence of $P_s$ on $s$ and $vT/L=1$. That proves (\ref{main1 pos}). \cref{main2 pos} follows more simply from (\ref{chern3}), the second \cref{derivatives new frame} and
\begin{equation*}
\frac{\partial\Hat{\psi}}{\partial\Hat{x}}=\frac{\partial\psi}{\partial x}\,.    
\end{equation*}
\end{proof}
Along with $\mathbb{T}_x$ and $\mathbb{T}_s$, a third 2-torus $\mathbb{T}_z=\{z\}\times S^1\times S^1$ is obtained from $\mathbb{T}_3$ by fixing $z\in\gamma$ to any value, cf. \cref{figure 3torus}. Let $\mathbb{T}_z$ be given the orientation $ds \wedge dx$. Like in (\ref{3.8a}), the restriction $P_z=P|_{\mathbb{T}_z}$ comes up naturally. Though it plays no role in the main result \cref{theorem2}, it pays to know it is trivial.
\begin{proposition}\label{prop 3.5}
\begin{equation}\label{third chern number}
\ch P_z=0\,,\qquad (z\in\gamma)\,.
\end{equation}
\end{proposition}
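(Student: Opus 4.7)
My plan is to establish $\ch P_z = 0$ by showing that $P_z$ becomes trivial in the limit $z \to -\infty$ and combining this with the homotopy invariance of the Chern number as $z$ varies within the resolvent set.

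First I would observe that the construction of $P_z$ extends verbatim from the contour $\gamma$ to every $z$ in the full resolvent set $\Omega := \cc \setminus \bigcup_{s \in S^1}\sigma(H(s))$: by \cref{lemma1}, the fiber $V^+_{(z,s,x_0)}$ remains of dimension $n$ and varies smoothly in $z \in \Omega$, so $P_z$ is a smooth rank-$n$ vector bundle over $\mathbb{T}_z$ for each such $z$. Since $\ch P_z \in \mathbb{Z}$ depends continuously on $z$, it is locally constant on $\Omega$. Moreover $\bigcup_s \sigma(H(s))$ is a closed proper subset of $\rr$ (bounded below by $\inf V$), so its complement $\Omega$ is a connected open subset of $\cc$; hence $\ch P_z$ takes a single value throughout $\Omega \supset \gamma$.

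Second I would evaluate that value by letting $z \to -\infty$ along the real axis, which stays in $\Omega$ as soon as $-z > \sup_{s,x}|V(x,s)|$. A standard Jost/Volterra construction for the Schrödinger equation $\psi''=(V-z)\psi$, treated as a bounded perturbation of the free equation $\psi''=-z\,\psi$, produces a basis of $L^2$-at-$+\infty$ solutions of the form $\psi_j(x) = e^{-\sqrt{-z}\,x}\bigl(e_j + \mathcal{O}(|z|^{-1/2})\bigr)$, uniformly in $(s, x_0) \in S^1 \times S^1$ by compactness of the parameter torus and smoothness of $V$. Consequently $V^+_{(z,s,x_0)} \subset \cc^{2n}$ lies within $\mathcal{O}(|z|^{-1/2})$ of the fixed $n$-plane spanned by $\{(e_j, -\sqrt{-z}\,e_j)\}_{j=1}^n$, which is independent of $(s, x_0)$. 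For $|z|$ sufficiently large, the classifying map $\mathbb{T}_z \to \mathrm{Gr}(n, 2n)$ attached to $P_z$ is therefore homotopic to a constant, $P_z$ is trivial, and $\ch P_z = 0$; by the first step the same value must then hold on $\gamma$.

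The hard part will be making the Jost asymptotics uniform in $(s, x_0)$, which follows from the usual contraction-mapping solution of the Volterra equation governing the Jost function together with compactness of the parameter torus. An alternative route that sidesteps this asymptotic analysis is to trivialize $P_z$ by a Floquet frame $\Phi_j(s, x_0) = \mu_j(s; z)^{-x_0/L}\,\Psi_j(x_0; z, s)$ built from the decaying Floquet eigenmodes of the transfer matrix (those with $|\mu_j| < 1$); the obstruction to defining $\Phi_j$ smoothly on all of $\mathbb{T}_z$ is the winding of $s \mapsto \mu_j(s; z)$ around $0$ in $\cc \setminus \{0\}$, which is likewise locally constant in $z$ and vanishes at $z \to -\infty$ since each $\mu_j(s; z)$ then stays near the positive real axis.
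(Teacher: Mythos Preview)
Your strategy---extend $P_z$ over the whole resolvent set, use that $\ch P_z$ is locally constant on the connected set $\Omega$, and compute it at $z\to-\infty$---is sound and genuinely different from the paper's argument. The paper instead exploits the involutions of transposition and complex conjugation on solutions of (\ref{schroed1}, \ref{schroed2}) to produce bundle (anti-)isomorphisms that yield the relation $\ch(P^+_V)_z=-\ch(P^+_V)_{\bar z}$; evaluated at a real point (e.g.\ $z=\mu$) this forces $\ch P_z=0$, and constancy along $\gamma$ finishes. So both arguments rely on constancy in $z$, but the paper reduces to a symmetry at a single real $z$, whereas you deform all the way to $-\infty$. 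Your route is more analytic and in fact yields more (actual triviality of $P_z$ for large $-z$), while the paper's route is purely algebraic and avoids any asymptotic estimates.

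One genuine technical point deserves care: the phrase ``standard Jost/Volterra construction'' and the asymptotic form $\psi_j(x)=e^{-\sqrt{-z}\,x}(e_j+\mathcal{O}(|z|^{-1/2}))$ are not available here, because the Volterra integral equation for Jost solutions requires $V$ to be integrable at $+\infty$, which a periodic potential is not. The decaying solutions are Floquet solutions $e^{\lambda x}p(x)$ with $p$ periodic, and $e^{\sqrt{-z}\,x}\psi_j(x)$ need not even stay bounded. What \emph{is} true---and is all you need---is the statement about the \emph{subspace}: after rescaling $(\psi,\psi')\mapsto(\psi,\psi'/\kappa)$ with $\kappa=\sqrt{-z}$, the first-order system reads $\tilde\Psi'=\kappa\bigl(\begin{smallmatrix}0&I\\I+V/\kappa^2&0\end{smallmatrix}\bigr)\tilde\Psi$, which for $\kappa^2\gg\|V\|_\infty$ is a small bounded perturbation of a constant hyperbolic system with stable subspace $\{(\psi_0,-\psi_0)\}$. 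Roughness of exponential dichotomies (or the Riccati/WKB argument) then gives that $V^+_{(z,s,x_0)}$ lies within $\mathcal{O}(\|V\|_\infty/\kappa^2)$ of the fixed plane $\{(\psi_0,-\kappa\psi_0)\}$, uniformly in $(s,x_0)$ by compactness of the torus. With that correction your argument goes through; the alternative Floquet-frame route you sketch is more delicate because the individual multipliers $\mu_j$ may collide and need not be smooth in $s$.
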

The proof will be given at the end of \cref{app b}.

\section{The scattering perspective}\label{section 4}
In this section we shall review the bulk-edge correspondence \cite{ortelli} for Thouless pumps, which relates their topological data $\ch P$ and $\rk P$ with scattering data. The latter data are at the core of the Büttiker theory \cite{buettiker,Citro} of quantum pumps, the scope of which is wider, in that pumps may be non-topological. The setup of that theory is that of a pump proper connected to any number of (semi-infinite) 1D channels, labelled $\alpha$ and sharing a common Fermi level $\mu$. The pump is configured in terms of some external parameter $X$, that determine a scattering matrix $S=(S_{\alpha\beta})=S(X)$ at energy $\mu$. As $X$ changes slowly in time, a (net) current flows through the channels. Upon completing a cycle in $X$ (or just in $S$), a net charge $Q_\alpha =(\i/2\pi)\oint ((dS)S^*)_{\alpha\alpha}$ has left the pump through channel $\alpha$.\\

If the pump proper has a gap at the Fermi level, then $S$ becomes block-diagonal w.r.t. groups $A$ of
channels $\alpha$ docking at the same point, provided different ones are widely separated. The block of any group stands for its (unitary) reflection matrix $R_A(X)$. Despite transmission being suppressed at fixed parameter $X$, transport remains possible when $X$ is varied. Indeed, $Q_A=\sum_{\alpha\in A} Q_\alpha$ equals the winding number of $\det R_A$.\\

Given a Thouless pump, a Büttiker pump is obtained by replacing the portion $x\le x_0$ of the former by an ideal channel, say one hosting free electrons. Cycles in $R(X)$ are then completed when either the truncation point $x_0$ or the time $s$ undergo a spatial and temporal period, respectively. The
corresponding bulk-edge correspondences will be stated rigorously in \cref{prop 4.2}. As a consequence, it will suffice to establish Galilei covariance in Büttiker's theory.  \\

The precise context is that of \cref{section 3}. We recall in particular the bundle $P$ having a 3-torus $\mathbb{T}_{3}$ as base space, as well as its restrictions (\ref{3.8a}) to 2-tori that arise by fixing $x$ or $s$. We consider the truncation of the potential $V(x, s)$ to $x \geq x_{0}$, whereby (\ref{hamiltonian}) gets replaced by
\begin{equation}\label{hamiltonian edge}
H(s, x_{0}) = p^{2} + \theta(x - x_{0}) V(x, s)\,,
\end{equation}
with $\theta$ being the Heaviside function. We shall suppress the variable $s$ from the notation until further notice.\\

Incoming (R) and outgoing (L) plane waves
\begin{equation}\label{scattering waves}
\psi_{R / L} \e^{ \pm \i k(x-x_{0})}\,, \qquad (k > 0\,,\, \psi_{R / L} \in \mathbb{C}^{n})    
\end{equation}
are now solutions of the time-independent Schrödinger equation for (\ref{hamiltonian edge}), for $x < x_{0}$ and at real energies $z = k^{2}$. We consider them and their superpositions at the energy
$$
z = \mu\,.
$$
By the gap condition (\ref{fermi level}) the matching solution for $x > x_{0}$ is decaying at $x \rightarrow +\infty$, thus preventing transmission. The reflection matrix,
$$
\psi_{L} = R(x_{0}) \psi_{R}\,,
$$
is unitary in $M_{n}(\mathbb{C})$ and given by \cite[Eq. (45)]{ortelli}
\begin{equation}\label{refl coeff}
R(x_{0}) = -(\psi^{\prime}(x_{0}) + \i k \psi(x_{0}))^{-1}(\psi^{\prime}(x_{0}) - \i k \psi(x_{0}))\,,    
\end{equation}
where $\psi(x) \in M_{n}(\mathbb{C})$ is a matrix solution of (\ref{schroed1}) consisting of linearly independent column solutions as in (\ref{lemma1 eq1}). (See also the proof of \cref{prop2} and \cref{app C}.) Any two choices of matrix solutions $\psi$ differ by right multiplication with a matrix from $GL_{n}(\mathbb{C})$, hence (\ref{refl coeff}) is independent thereof. We now reinstate $s\in S^1$ and state the following result.
\begin{lemma}\label{lemma 4.1}
\begin{enumerate}[label=(\roman*), wide]
\item The reflection matrices for truncation at $x$ satisfy
\begin{align}
R(x, s+T) = R(x, s)\,, \label{sym1 r}\\
R(x+L, s) = R(x, s)\,, \label{sym2 r}
\end{align}

thus the winding numbers $W_{x}$ and $W_{s}$ of det $R(x, s)$ as a function of $s$ and $x$, respectively, are well-defined.

\item The numbers $W_x$, $W_s$ are independent of $x$ and $s$ and given by the expressions
$$
\begin{aligned}
& W_{x} = \int_{0}^{T} d s \operatorname{tr} R(x, s)^{*} \partial_{s} R(x, s)\,, \\
& W_{s} = \int_{0}^{L} d x \operatorname{tr} R(x, s)^{*} \partial_{x} R(x, s)\,.
\end{aligned}
$$
\end{enumerate}
\end{lemma}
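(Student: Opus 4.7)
The two parts of the lemma are essentially independent. Part~(i) rests purely on the periodicities of $V$, while part~(ii) rests on homotopy invariance of the winding number together with smooth dependence of $R$ on its parameters.

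For part~(i), $s$-periodicity is immediate: $V(x,s+T)=V(x,s)$ implies $H(s+T,x)=H(s,x)$, so the matrix solution $\psi$ used in (\ref{refl coeff}) is unchanged and hence $R(x,s+T)=R(x,s)$. For $x$-periodicity the idea is to translate. Let $\psi(y)$ be a matrix solution on $[x,\infty)$ whose columns span $V^+_{(\mu,s,x)}$; by spatial periodicity of $V$ together with Lemma~\ref{lemma1}(i), the translate $\widetilde\psi(y)\coloneqq\psi(y-L)$ is a matrix solution on $[x+L,\infty)$ whose columns span $V^+_{(\mu,s,x+L)}$, with initial data at $x+L$ identical to the initial data of $\psi$ at $x$. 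Substituting $\widetilde\psi$ into (\ref{refl coeff}) at truncation point $x+L$ reproduces the right-hand side with $\psi$ evaluated at $x$, so $R(x+L,s)=R(x,s)$.

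For part~(ii), I would first observe that $R(x,s)$ is a smooth $U(n)$-valued function on $\mathbb{T}_s\times\mathbb{T}_x$: smoothness of $\psi(y;s,x)$ in its parameters is standard ODE theory, and unitarity of $R$ (as proved in \cite{ortelli} via a Wronskian/current-conservation argument) guarantees that the inverse in (\ref{refl coeff}) exists. The winding number of the loop $s\mapsto\det R(x,s)$ is then integer-valued and depends continuously on $x$, hence is constant in $x$; the same argument with $(s,x)$ interchanged shows $W_s$ is independent of $s$. The explicit integral formulas are the standard winding-number expressions: unitarity $R^*R=\mathds{1}$ gives $\tr(R^*dR)=d\log\det R$, so $\int\tr(R^*\partial_s R)\,ds$ around a generator of $\mathbb{T}_s$ reproduces (up to the conventional $2\pi\i$ normalization) the winding number of $\det R$ in $s$, and similarly for $W_s$.

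The one technical point I anticipate will need care is globally establishing invertibility of $\psi'(x)+\i k\psi(x)$ appearing in (\ref{refl coeff}); a vanishing of this matrix at some $(x,s)$ would correspond to a half-bound state of the truncated Hamiltonian (\ref{hamiltonian edge}) at energy $\mu$, which the gap hypothesis (\ref{fermi level}) should preclude. Once this is secured, the homotopy-invariance argument for independence and the identification of $\tr(R^*dR)$ with $d\log\det R$ are routine.
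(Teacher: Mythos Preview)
Your proposal is correct and follows essentially the same approach as the paper: periodicities of $V$ yield periodicities of $R$ (the paper's remark that the plane waves take values $\psi_{R/L}$ at the truncation point regardless of its location is the complement of your translation argument for the decaying solution), and part~(ii) uses continuity for the independence together with $d(\log\det R)=\tr(R^*dR)$ for the formulas. Your extra care about the invertibility of $\psi'+\i k\psi$ is a valid technical point that the paper defers to \cite{ortelli}.
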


\begin{remark}
The \cref{sym1 r,sym2 r} depend on just the corresponding periodicity equation (\ref{periodicity1}), (\ref{periodicity2}) respectively.    
\end{remark} 

\begin{proof}
Part (i) follows from the periodicities of $V$, as just stated, but also because the plane waves (\ref{scattering waves}) take the values $\psi_{R / L}$ at the truncation point $x_{0}$, no matter where it is located. As for part (ii), the independence on $s, x$ is by continuity and the expressions follow from $d(\log \det R) = \tr(R^{*} d R)$.
\end{proof}
The following result complements \cref{prop2}.

\begin{proposition}\label{prop 4.2}
\begin{align}
\ch P_{x} &= W_{x}\,, \qquad \text{(any $x$)}\,, \label{prop 4.2 1}\\
\ch P_{s} &= -W_{s}\,,  \qquad\text{(any $s$})\,. \label{prop 4.2 2}
\end{align}
\begin{proof}
\cref{prop 4.2 1} is a restatement of \cite[Thm. 2]{ortelli} and is based on Lm. 3 there. We recall it briefly. The base space of the bundle $P_{x}$ is the 2-torus $\mathbb{T}_{x}= \gamma \times \mathbb{S}^{1}\times\{x\}\ni (z,s,x)$, where $\gamma$ is the loop described below \cref{3torus}. On that torus, a global section of the frame bundle of $P_{x}$, called $\psi(z,s)=\psi(z, s, x) \in M_{n}(\mathbb{C})$ would exist, were it not for isolated points $(\mu, s_{*})$ along the line $z = \mu$ in $\mathbb{T}_{x}$, where $\psi(\mu, s_*, x_{0}) = 0$, yet $\psi^{\prime}(\mu, s_*, x_{0}) \neq 0$. We set
\begin{equation*}
K(z, s, x)= \psi^{\prime}(\bar{z}, s, x)^{*} \psi^{\prime}(z, s, x)\,,\qquad L(z, s, x)= \psi^{\prime}(\bar{z}, s, x)^{*} \psi(z, s, x)\,.
\end{equation*}
The two matrices are Hermitian for $z$ real; moreover, for $s$ near any of the above points $s_{*}$, $K(\mu, s, x)\equiv K(s, x) >
0$ and $s \mapsto L(\mu, s, x) \equiv L(s, x)$ has an eigenvalue branch $s \mapsto \lambda(s)$ with $\lambda(s_*) = 0$. Also,
$$
\ch P_{x} = \sum_{s_{*}} (\pm 1)\,,
$$
with the upper sign applying in case the crossing is ascending. For short, the Chern number is given by the spectral flow of $s \mapsto L(s, x)$. Finally, the reflection matrix (\ref{refl coeff}) reads
$$
R(s, x) = -(K(s, x) + \i k L(s, x))^{-1}(K(s, x) - \i k L(s, x))\,.
$$
Thus, its eigenvalues $r \in \mathbb{C}$, $(|r| = 1)$ cross $-1$ (counterclockwise) precisely when those of $L(s, x)$, i.e. $\lambda\in\rr$, cross 0 (ascending). That proves (\ref{prop 4.2 1}).\\ 

The argument for (\ref{prop 4.2 2}) is the same upon interchanging $s$ and $x$, except for the opposite orientations of $\mathbb{T}_{x}$ and $\mathbb{T}_{s}$, as seen below (\ref{3.8a}). Hence the minus sign in (\ref{prop 4.2 2}).
\end{proof}
\end{proposition}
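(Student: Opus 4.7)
The plan is to reduce each Chern number to a spectral flow that in turn equals the winding of $\det R$. Consider $\ch P_x$ first. Take the matrix solution $\psi(z,s,x) \in M_n(\cc)$ whose columns span $V^+_{(z,s,x_0)}$ (cf. the frame built in the proof of \cref{prop2}); together with $\psi'$ it furnishes a frame of $P_x$ over the 2-torus $\mathbb{T}_x = \gamma \times S^1 \times \{x\}$. For $z \in \gamma \setminus \{\mu\}$ the decay at $+\infty$ is strict exponential and $\psi$ has full rank, so the frame is nondegenerate; the only place where it may collapse is on the real segment $\{z=\mu\}$, and generically at isolated points $(\mu, s_*)$ where $\psi(\mu,s_*,x)=0$ but $\psi'(\mu,s_*,x)\neq 0$.

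Next, following the device of \cite{ortelli}, introduce the Hermitian (for $z\in\rr$) matrices
\begin{equation*}
K(z,s,x) = \psi'(\bar z,s,x)^{*}\psi'(z,s,x), \qquad L(z,s,x) = \psi'(\bar z,s,x)^{*}\psi(z,s,x).
\end{equation*}
Near any critical $s_*$ one has $K(\mu,s,x)>0$ while $L(\mu,s,x)$ acquires a simple eigenvalue branch $\lambda(s)$ vanishing at $s_*$. A standard localization argument (cutting $\mathbb{T}_x$ into a region where the frame is good and small disks around the $(\mu,s_*)$, then using Stokes together with the Cayley-type gauge that resolves the degeneracy) identifies
\begin{equation*}
\ch P_x \;=\; \sum_{s_*}\operatorname{sgn}(\lambda'(s_*)) \;=\; \operatorname{sf}\bigl(s\mapsto L(\mu,s,x)\bigr),
\end{equation*}
the spectral flow through $0$.

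To connect with $W_x$, rewrite the reflection matrix (\ref{refl coeff}) at $z=\mu=k^2$ as
\begin{equation*}
R(x,s) = -(K + \i k L)^{-1}(K - \i k L),
\end{equation*}
a Cayley transform of the pencil $(K,L)$. Since $K>0$ one may simultaneously diagonalize (working with $K^{-1/2}LK^{-1/2}$), and each eigenvalue $r$ of $R$ traces the unit circle in such a way that $r=-1$ is attained (crossed counterclockwise) precisely when the corresponding eigenvalue of $L$ crosses $0$ (ascending). Summing over eigenvalues gives $W_x = \operatorname{sf}(s\mapsto L(\mu,s,x)) = \ch P_x$, proving (\ref{prop 4.2 1}).

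Statement (\ref{prop 4.2 2}) follows by the same scheme with the roles of $s$ and $x$ interchanged: $L$ and $K$ are now viewed as functions of $x$ along $z=\mu$, the critical points are zeros of $\psi$ in $x$, and the spectral flow of $L(\mu,s,\cdot)$ equals $W_s$ as well as a Chern number of $P_s$. The sign in (\ref{prop 4.2 2}) arises because $\mathbb{T}_x$ was oriented by $dz\wedge ds$ whereas $\mathbb{T}_s$ was oriented by $dx\wedge dz$; carrying the bookkeeping of the $(s)\leftrightarrow(x)$ swap through the Cayley calculation flips exactly one orientation relative to the previous case. The main obstacle is ensuring the genericity of the frame degeneracies (isolated, simple zeros of $\lambda$), which one handles either by a small perturbation of $V$ within its homotopy class (invariance of the topological quantities under such perturbations is manifest) or by a direct Hermitian perturbation-theoretic analysis of $L$ on the real axis.
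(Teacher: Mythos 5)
Your proposal is correct and follows essentially the same route as the paper's proof: it recalls the frame degeneracies at isolated points $(\mu,s_*)$, introduces the Hermitian pencil $(K,L)$, identifies $\ch P_x$ with the spectral flow of $L$ through $0$, matches that to the winding of $\det R$ via the Cayley-transform form of the reflection matrix, and obtains (\ref{prop 4.2 2}) by the $s\leftrightarrow x$ swap with the orientation sign. The only difference is that you spell out the localization/genericity steps that the paper delegates to the cited reference.
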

Hatted and unhatted symbols shall refer to the reference frames $\hat{F}$ and $F$, cf. \cref{section 2}. Truncation to $\hat{x} \geq\hat{x}_{0}$ of the potential $\hat{V}(\hat{x}, s)$ yields
$$
\theta(\hat{x} - \hat{x}_{0}) \hat{V}(\hat{x}, s) = \theta(x - (\hat{x}_{0} + vs)) V(x, s)
$$
by (\ref{boost}). Thus
\begin{equation}\label{boost r}
\hat{R}(\hat{x}, s) = R(\hat{x} + vs, s)\,,
\end{equation}
after replacing $\hat{x}_{0}$ by $\hat{x}$. As that agrees with the transformation law of $V$, cf. (\ref{transf potential}), \cref{lemma 0} applies to $\hat{R}$ as well.
\begin{theorem}\label{theorem 4.3}
Suppose the two frames are related by (\ref{boost}, \ref{periods}). Then
\begin{align}
& \hat{W}_{x}=m W_x+n W_s\,, \label{galilei edge 1} \\
& \hat{W}_s=m W_s\,. \label{galilei edge 2}
\end{align}
\begin{proof}
As earlier on, we may assume $n=m=1$ in (\ref{periods}), and hence (\ref{base case}).
\cref{boost r} implies
\begin{align}
& \partial_{s} \hat{R}=\partial_{s} R+v \partial_{x} R\,, \label{der edge 1}\\
& \partial_{x} \hat{R}=\partial_{x} R\,, \label{der edge 2}
\end{align}
where the evaluation of the left and right sides is at $(\hat{x}, s)$ and $(\hat{x} + vs,s)$, respectively; likewise, after multiplication with $\hat{R}^*$ and $R^*$. By (ii) of \cref{lemma 4.1}, including the independence clause, we have
\begin{align*}
\hat{W}_x =\frac{1}{L} \int_{0}^{L} d \hat{x} \hat{W} _{\hat{x}}&=\frac{1}{L} \int_{0}^{L} d \hat{x} \int_{0}^{T} d s \tr \hat{R}(\hat{x}, s)^{*} \partial_{s} \hat{R}(\hat{x}, s) \\
& =\int_{0}^{T} d s \frac{1}{L} \int_{0}^{L} d x \tr R^{*}(\partial_{s} R+v \partial_{x} R)\,,
\end{align*}
where we changed the order of integration and performed the change of variable $x := \hat{x} + v s$, which in turn relied on the periodicity (\ref{sym2 r}) by way of $\int_{vs}^{L + vs} d x \ldots = \int_{0}^{L} d x \ldots$. We then split the integral in two and undo the change of order in the first one, so as to obtain
\begin{equation*}
\hat{W}_{x} =\frac{1}{L} \int_{0}^{L} d x W_{x} + \frac{v}{L} \int_{0}^{T} d s W_{s} =W_{x} + \frac{v T}{L} W_{s}\,.
\end{equation*}
That proves (\ref{galilei edge 1}) by (\ref{base case}). As for (\ref{galilei edge 2}), it follows likewise from (\ref{der edge 2}).
\end{proof}
\end{theorem}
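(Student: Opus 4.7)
The plan is to mirror the strategy used for \cref{theorem1,theorem2}: first reduce to the case $m=n=1$ by absorbing the factors into a redefinition of periods, then compute the winding integrals of \cref{lemma 4.1}(ii) directly in the new frame, exploiting the transformation law (\ref{boost r}).

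For the reduction to $m=n=1$, I would note that the two rescalings $L\rightsquigarrow mL$ and $T\rightsquigarrow nT$ affect $W_x$ and $W_s$ multiplicatively in the obvious way: since the integrand of $W_x$ is $T$-periodic in $s$, replacing $T$ by $nT$ multiplies $W_x$ by $n$, and similarly $W_s$ is multiplied by $m$ under $L\rightsquigarrow mL$. The cross-ratios $n$ and $m$ in \cref{galilei edge 1,galilei edge 2} are thus bookkeeping, and it suffices to establish the base case (\ref{base case}), i.e. $\hat{W}_x=W_x+W_s$ and $\hat{W}_s=W_s$.

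Next I would differentiate (\ref{boost r}), obtaining
\begin{equation*}
\partial_s \hat{R}(\hat{x},s)=(\partial_s R+v\partial_x R)(\hat{x}+vs,s)\,,\qquad \partial_{\hat{x}}\hat{R}(\hat{x},s)=(\partial_x R)(\hat{x}+vs,s)\,,
\end{equation*}
and likewise for $\hat{R}^*$. The covariance law $\hat{W}_s=W_s$ then follows immediately by inserting the second identity into the $W_s$ integral of \cref{lemma 4.1}(ii) at any fixed $s$, translating the integration variable $\hat{x}\mapsto x=\hat{x}+vs$, and invoking the $L$-periodicity (\ref{sym2 r}) of $R^*\partial_x R$. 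The more substantive identity is $\hat{W}_x=W_x+W_s$, which I would handle as follows: by independence of $W_x$ on $x$, average $\hat{W}_{\hat{x}}$ over $\hat{x}\in[0,L]$, swap the order of integration, change variables $x=\hat{x}+vs$ (using (\ref{sym2 r}) to restore the integration interval to $[0,L]$), and split the resulting double integral into two pieces. The first piece reproduces $W_x$ after averaging in $x$, while the second yields $(v/L)\int_0^T W_s\,ds=(vT/L)W_s$, which equals $W_s$ by $vT=L$.

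The only subtle step is the change of variables at fixed $s$, because the substitution depends on $s$. The key observation is that by (\ref{sym2 r}) the integrand is $L$-periodic in $x$, so $\int_{vs}^{L+vs}dx=\int_0^L dx$, and the $s$-dependence of the substitution leaves no boundary remainder. Once this is noted, the computation becomes a routine rearrangement, and both displayed identities follow.
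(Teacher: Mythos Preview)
Your proposal is correct and follows essentially the same route as the paper's proof: reduce to $n=m=1$, differentiate (\ref{boost r}) to obtain $\partial_s\hat R=\partial_s R+v\partial_x R$ and $\partial_{\hat x}\hat R=\partial_x R$, then average $\hat W_{\hat x}$ over a spatial period, swap the order of integration, change variables $x=\hat x+vs$ using the $L$-periodicity (\ref{sym2 r}), and split into the $W_x$ and $(vT/L)W_s$ contributions. Your treatment of $\hat W_s$ and your remark on why the $s$-dependent substitution leaves no boundary term are likewise exactly what the paper does (the latter more tersely).
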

The proof of (\ref{galilei edge 1}) ($n=m=1$) is visualized in \cref{fig2}. Each line covers a period and comes with a winding number for $R(x,s)$. The slanted line (winding $\hat{W}_x$) can be deformed to the combined vertical and horizontal lines ($W_x$ and $W_s$). Note the similarity to \cref{fig1}.
\begin{figure}[H]
\centering
\setlength{\unitlength}{4144sp}%
\begin{picture}(1332,1666)(121,-1205)
\thinlines
{\color[rgb]{0,0,0}\multiput(1171,119)(0.00000,-9.00000){121}{\makebox(1.5875,11.1125){\small.}}
}%
\put(1396,299){\makebox(0,0)[lb]{\smash{\fontsize{12}{14.4}\normalfont {\color[rgb]{0,0,0}$\hat{x}=0$}%
}}}
\put(250, 29){\makebox(0,0)[lb]{\smash{\fontsize{12}{14.4}\normalfont {\color[rgb]{0,0,0}$T$}%
}}}
\put(1081,-1141){\makebox(0,0)[lb]{\smash{\fontsize{12}{14.4}\normalfont {\color[rgb]{0,0,0}$L$}%
}}}
{\color[rgb]{0,0,0}\put(450,-960){\vector( 2, 3){900}}
}%
{\color[rgb]{0,0,0}\put(450,-960){\vector( 1, 0){990}}
}%
{\color[rgb]{0,0,0}\put(450,-960){\vector( 0, 1){1350}}
}%
\put(1396,-1141){\makebox(0,0)[lb]{\smash{\fontsize{12}{14.4}\normalfont {\color[rgb]{0,0,0}$x$}%
}}}
\put(270,300){\makebox(0,0)[lb]{\smash{\fontsize{12}{14.4}\normalfont {\color[rgb]{0,0,0}$s$}%
}}}
\put(136,-470){\makebox(0,0)[lb]{\smash{\fontsize{12}{14.4}\normalfont {\color[rgb]{1,0,0}$W_x$}%
}}}
\put(830,-570){\makebox(0,0)[lb]{\smash{\fontsize{12}{14.4}\normalfont {\color[rgb]{1,0,0}$\hat{W}_x$}%
}}}
\put(680,210){\makebox(0,0)[lb]{\smash{\fontsize{12}{14.4}\normalfont {\color[rgb]{1,0,0}$W_s$}%
}}}
\thicklines
{\color[rgb]{1,0,0}\put(451,-961){\vector( 2, 3){720}}
}%
{\color[rgb]{1,0,0}\put(451,-961){\vector( 0, 1){1080}}
}%
{\color[rgb]{1,0,0}\put(451,119){\vector( 1, 0){720}}
}%
\end{picture}%
\caption{Winding numbers for existing and transported
charges.}\label{fig2}
\end{figure}
We conclude with the main result of this section.
\begin{proof}[Proof of \cref{theorem2} (alternative)]
The result follows from \cref{prop 4.2,theorem 4.3}.
\end{proof}
\begin{remark}
\cref{chern3,charge density,prop 4.2 2} combined state that
\begin{equation}\label{density edge}
\int_{\mathbb{R} / L \mathbb{Z}} d x P(x, x) = -W_s
\end{equation}
where $P$ is the Fermi projection. The l.h.s. is $\rk P$. Thus (\ref{density edge}) is a version of Sturm's oscillation theorem, cf. \cite[Thm. XIII.8]{reedsimoniv} and of the Johnson-Moser rotation number \cite{johnson}.
\end{remark}

\appendix
\begin{appendices}
\section{}
Let us at first consider a trivial Hermitian vector bundle $E=M\times V$ with base space $M$ and finite-dimensional fiber $V$. Let $P:M\rightarrow \mathcal{L}(V)$ be a smooth map that takes values in the orthogonal projections on $V$ and let $P$ also denote the subbundle $P\rightarrow M$ of $E$ having fibers $\operatorname{im} P\subset V$. Then $P$ may be non-trivial. For $\operatorname{dim}M =2$ its Chern number is
\begin{equation}\label{appA 1}
\ch P=\frac{\i}{2\pi} \int_M \text{tr}(P(dP\wedge dP))\,.
\end{equation}
The purpose of this appendix is (a) to extend the result to bundles that are not Hermitian and (b) to express their Chern numbers in terms of frames that are not orthogonal. Actually, (a) is evident, since the notion of trace rests on duality and not on an inner product, which can though be used to compute \cref{appA 1} in terms of orthogonal frames.\\

We thus recall dual systems $(\widetilde{V},V,b)$, where $\widetilde{V}$ and $V$ are finite-dimensional vector spaces and $b:\widetilde{V}\times V\rightarrow\cc$ is a non-degenerate bilinear form, meaning
\begin{equation*}
\begin{aligned}
& b(\widetilde{v}, v)=0\,,\enskip(v \in V) &\Longrightarrow\enskip \widetilde{v}=0\,, \\
& b(\widetilde{v}, v)=0\,,\enskip(\widetilde{v} \in \widetilde{V}) &\Longrightarrow \enskip v=0 \,.
\end{aligned}
\end{equation*}
Thus, $b$ induces an isomorphism
\begin{equation}\label{appA 2}
V^*\cong\widetilde{V}\,,    
\end{equation}
where $V^*$ is the dual space of $V$. In particular, $\operatorname{dim}\widetilde{V}=\operatorname{dim}V$. Moreover, any subspace $\widetilde{V}'\subset\widetilde{V}$ defines a subspace $\widetilde{V}'^{\perp}\subset V$ by
\begin{equation}\label{appA 3}
\widetilde{V}'^{\perp}=\{v\in V \mid b(\widetilde{v}', v)=0,\,(\widetilde{v}' \in \widetilde{V}')\} \,.
\end{equation}
It has $\operatorname{dim}\widetilde{V}'^{\perp}=\operatorname{codim}\widetilde{V}'$. Let $(\widetilde{V}',V',b)$ be a dual subsystem of $(\widetilde{V},V,b)$ , i.e. $\widetilde{V}'\subset \widetilde{V}$ and $V'\subset V$ be subspaces such that the restriction of $b$ to $\widetilde{V}'\times V'$ is non-degenerate. The dual subspace is uniquely characterized by a projection $P:V\rightarrow V$ such that
\begin{equation*}
\operatorname{im}P=V'\,, \qquad \operatorname{ker}P=\widetilde{V}'^{\perp}\,. 
\end{equation*}
The claim is a consequence of (i) $V'\cap\widetilde{V}'^{\perp}=\{0\}$, (ii) $V'+\widetilde{V}'^{\perp}=V$, which are in turn seen as follows: (i) Any $v\in V'\cap \widetilde{V}'^{\perp}$ is $v=0$ by (\ref{appA 3}) and the non-degeneracy of the restriction of $b$; (ii) then follows on grounds of dimension. As a map $P:V\rightarrow V$, the projection may be seen as a map $P\in V\otimes V^*\cong V\otimes\widetilde{V}$, where the isomorphism is by (\ref{appA 2}). In those terms, $P$ can be constructed by means of a (non-unique) biorthogonal frame, i.e. bases
\begin{equation}\label{appA 4}
\widetilde{F}=(\widetilde{v}_1,\dotsc,\widetilde{v}_n),\qquad F=(v_1,\dotsc,v_n)    
\end{equation}
of $\widetilde{V}'$ and $V'$, respectively, such that
\begin{equation}\label{appA 5}
b(\widetilde{v}_i,v_j)=\delta_{ij}\,.    
\end{equation}
Then $P$ is given by
\begin{equation*}
P=\sum_{i=1}^nv_i\otimes \widetilde{v}_i\,.    
\end{equation*}
Let next $\widetilde{E}=M\times \widetilde{V}$ and $E=M\times V$ be two trivial bundles and $(\widetilde{V},V,b)$ be a dual system. Let $\widetilde{P}\rightarrow M$ and $P\rightarrow M$ be (possibly non-trivial) subbundles of $\widetilde{E}$ and $E$, respectively, such that $b$ remains fiberwise non-degenerate.\\

Let bases $\widetilde{F}=\widetilde{F}(\widetilde{p})$ and 
$F=F(p)$ of fibers $\widetilde{P}_{\widetilde{p}}$ and $P_p$ be given as in (\ref{appA 4}, \ref{appA 5}), locally in $\widetilde{p},p\in M$. Let then $b(\widetilde{F},F)$ be a matrix of order $n$, defined (likewise locally) by
\begin{equation*}
b(\widetilde{F}(\widetilde{p}),F(p))_{ij}=b(\widetilde{v}_i(\widetilde{p}),v_j(p))\,, \qquad (i,j=1,\dotsc,n;\enskip\widetilde{p},p\in M)\,.    
\end{equation*}
We then have:
\begin{proposition}\label{prop appA}
Let $\operatorname{dim}M=2$. The Chern number of $P$ is given by \cref{appA 1} as well as by
\begin{equation}\label{appA 7}
\ch P=\frac{\i}{2 \pi} \int_M \tr\bigl(\widetilde{d}\wedge d\bigr) b(\widetilde{F}, F) \Big|_{\widetilde{p}=p}\,.
\end{equation}
\begin{proof}
The integrand is independent of bi-orthogonal frames on the intersection of patches. Hence it is globally defined.\\

It pays to spell out the integrand. We consider functions $f:M\times M\rightarrow \cc$ and their restriction to the diagonal submanifold $\{(\widetilde{p},p)\mid \widetilde{p}=p\}$. Also, given a vector field $X$ on $M$, vector fields $\widetilde{X}=(X,0)$ and $X\equiv(0,X)$ are induced on $M\times M$. Then 
\begin{equation}\label{a8}
(\widetilde{d} \wedge d) f\big|_{\widetilde{p}=p}(X, Y)\coloneqq g(p)\,,\qquad g=(\widetilde{X} Y-\widetilde{Y} X) f\big|_M\,.
\end{equation}
Given a section $v=v(p)\in V$ and $X$ as above, a section $d_X v$ is defined as a directional derivative, i.e.
\begin{equation*}
(d_X v)(p)=\frac{d}{d \lambda} v(p(\lambda))\Big|_{\lambda=0}\,,
\end{equation*}
where $\lambda\mapsto p(\lambda)$ is a curve in $M$ with $p(0)=p$, $dp/d\lambda\mid_{\lambda=0}=X_p$. (The construction is well-defined because the vector space $V$ is independent of $p$.)\\

A first claim is now 
\begin{equation}\label{a9}
((\widetilde{d} \wedge d) \operatorname{tr} b(\widetilde{F}, F))\big|_{\widetilde{p}=p}(X, Y)=\sum_{i=1}^n b((d_X \widetilde{v}_i)(p),(d_Y v_i)(p))-(X\leftrightarrow Y)\,.
\end{equation}
It is seen from (\ref{a8}) and 
\begin{equation*}
\begin{gathered}
\operatorname{tr} b(\widetilde{F}, F)=\sum_{i=1}^n b(\widetilde{v}_i(\widetilde{p}), v_i(p))\,,\\
(\widetilde{X}Y\operatorname{tr} b(\widetilde{F}, F))(\widetilde{p}, p)= \sum_{i=1}^n b(d_X \widetilde{v_i}(\widetilde{p}), d_Y v_i(p))\,.
\end{gathered}
\end{equation*}
A second claim, this time referring to (\ref{appA 1}), is
\begin{equation}\label{a10}
\operatorname{tr}(P(d P \wedge d P))(X, Y)=\sum_{i=1}^n\langle d_X \widetilde{v}_i, d_Y v_i\rangle-(X \leftrightarrow Y)\,,
\end{equation}
where $\widetilde{v}_i$ is the dual basis to $v_i$ and $\langle\cdot,\cdot\rangle$ the duality bracket. It is seen from 
\begin{equation*}
\begin{gathered}    
P=\sum_{i=1}^n v_i\langle \widetilde{v}_i, \cdot\rangle\,,\\
d_X P=\sum_{i=1}^n d_X v_i\langle\widetilde{v}_i, \cdot\rangle+v_i\langle d_X \widetilde{v}_i, \cdot\rangle\,,\\
\operatorname{tr}(P d_X P d_Y P)=\sum_{i=1}^n\langle\widetilde{v}_j, d_X v_i\rangle\langle \widetilde{v}_i, d_Y v_j\rangle+\langle \widetilde{v}_i, d_X v_j\rangle\langle d_Y \widetilde{v}_j, v_i\rangle\\
\hspace{3cm} +\delta_{ij}\langle d_X \widetilde{v}_i, d_Y v_j\rangle+\langle d_X \widetilde{v}_i, v_j\rangle\langle d_Y \widetilde{v}_j, v_i\rangle\,,
\end{gathered}
\end{equation*}
where we made use of $\langle \widetilde{v}_j,v_i\rangle=\delta_{ij}$. The first sum is symmetric in $X,Y$, while the second and fourth terms add up to $(d_X \langle\widetilde{v}_i,v_j\rangle)\langle d_Y\widetilde{v}_j,v_i\rangle=0$. That establishes (\ref{a10}).\\

Finally, $\widetilde{V}$ and $b$ may be taken to be $V^*$ and $\langle\cdot,\cdot\rangle$ to begin with, because of (\ref{appA 2}). Then the hypothesis (\ref{appA 5}) says that $(\widetilde{v}_i)$ is the dual basis. At that point (\ref{a9}) and (\ref{a10}) are seen to agree.
\end{proof}
\end{proposition}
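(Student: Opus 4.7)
The plan is to verify that the two expressions for $\ch P$ agree pointwise as differential 2-forms on $M$; integration over $M$ then gives the claim. I would organize the argument in four steps.

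First, well-definedness: any two local biorthogonal frames on the overlap of two patches differ by a transition $(\widetilde F, F)\mapsto (\widetilde F g^{-\top}, F g)$ with $g\colon U\cap U'\to GL_n(\cc)$, since (\ref{appA 5}) forces the transition on $\widetilde F$ to be the inverse-transpose of the one on $F$. A short matrix manipulation, together with cyclicity of the trace and the identity $g^{-\top}g^{\top}=\mathds{1}$, shows that the integrand of (\ref{appA 7}) is invariant under such a change, so it patches into a globally defined 2-form on $M$. Next, using the isomorphism (\ref{appA 2}) I would reduce to the case $\widetilde V=V^{*}$ with $b=\langle\cdot,\cdot\rangle$: then biorthogonality becomes the standard dual-basis relation $\langle\widetilde v_i,v_j\rangle=\delta_{ij}$, and the projection reads $P=\sum_i v_i\langle\widetilde v_i,\cdot\rangle$.

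Second, I would unfold the bi-differential. From the definition (\ref{a8}), for any smooth $f(\widetilde p,p)$ on $M\times M$,
\[
((\widetilde d\wedge d)f)\big|_{\widetilde p=p}(X,Y)=(\widetilde X Y-\widetilde Y X)f\big|_{\widetilde p=p},
\]
where $\widetilde X=(X,0)$ differentiates only in the first slot and $Y=(0,Y)$ only in the second, so their mutual Lie bracket vanishes on $M\times M$ (a remark that avoids spurious terms). Applied to $\tr b(\widetilde F,F)=\sum_i\langle\widetilde v_i(\widetilde p),v_i(p)\rangle$, this gives (\ref{a9}), namely
\[
((\widetilde d\wedge d)\tr b(\widetilde F,F))\big|_{\widetilde p=p}(X,Y)=\sum_i\bigl(\langle d_X\widetilde v_i,d_Y v_i\rangle-\langle d_Y\widetilde v_i,d_X v_i\rangle\bigr).
\]

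Third, I would compute $\tr(P\,dP\wedge dP)$ directly in the same frame. Differentiating $P=\sum_i v_i\langle\widetilde v_i,\cdot\rangle$ yields
\[
d_X P=\sum_i(d_X v_i)\langle\widetilde v_i,\cdot\rangle+v_i\langle d_X\widetilde v_i,\cdot\rangle,
\]
and multiplying out $\tr(P\,d_X P\,d_Y P)$ produces four classes of terms, displayed as in the third block of equations preceding (\ref{a10}). The principal obstacle is the combinatorial bookkeeping: I would show that (i) the two ``pure'' contributions (double-$dv$ and double-$d\widetilde v$) are individually symmetric in $(X,Y)$ and therefore drop after antisymmetrization; (ii) among the remaining mixed terms, two combine—using $d_X\langle\widetilde v_i,v_j\rangle=0$, which follows from differentiating $\langle\widetilde v_i,v_j\rangle=\delta_{ij}$—to add up to zero; (iii) the surviving contribution collapses to the Kronecker-delta term $\sum_i\langle d_X\widetilde v_i,d_Y v_i\rangle$. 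This is exactly the content of (\ref{a10}).

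Finally, comparing the displayed formulas for $(\widetilde d\wedge d)\tr b(\widetilde F,F)|_{\widetilde p=p}$ and $\tr(P\,dP\wedge dP)$ shows they agree as 2-forms on $M$, hence so do their integrals times $\i/(2\pi)$. This establishes (\ref{appA 7}) and identifies it with the standard Chern number (\ref{appA 1}).
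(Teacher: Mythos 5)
Your proposal is correct and follows essentially the same route as the paper's proof: establish well-definedness of the integrand under changes of biorthogonal frame, reduce via (\ref{appA 2}) to $\widetilde{V}=V^*$ with the duality bracket, derive (\ref{a9}) from the definition (\ref{a8}), expand $\tr(P\,dP\wedge dP)$ in the frame and cancel the symmetric and mixed terms using $d_X\langle\widetilde{v}_i,v_j\rangle=0$, and compare the two resulting 2-forms. The only difference is cosmetic (you invoke the reduction to the dual-basis picture at the outset rather than at the end, and you spell out the transition-function argument for global well-definedness that the paper merely asserts).
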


\section{}\label{app b}
\begin{proof}[Proof of \cref{lemma1}]
In this proof we set $V^+_{(z,s,x_0)}\eqqcolon V^+_{x_0}$, since $z,s$ remain fixed. Likewise for $\widetilde{V}^-_{x_0}$.
\begin{enumerate}[label=(\roman*), wide]
\item Given any solution $\psi=\psi(x)$ as in (\ref{schroed1}), so is $\psi_L=\psi_L(x)\coloneqq\psi(x-L)$, because of $V(x-L)=V(x)$. If $\psi$ is $L^2$ at $x=+\infty$, then so is $\psi_L$. The initial data, cf. (\ref{initial cond}), are the same, in the sense that $(\Psi, x_0)=(\Psi_L,x_0+L)$. Thus $V_{x_0}^+\subset V^+_{x_0+L}$, and viceversa by $L\rightarrow -L$.
\end{enumerate}
In the remainder of this proof, we shall also drop the subscript $x_0$.
\begin{enumerate}[resume,label=(\roman*), wide]
\item Let $V^-\subset \cc^n \oplus\cc^n$ be defined as $V^+$, except that the $L^2$-condition is imposed at $x=-\infty$. We shall prove a more general statement than the first part of (ii), namely
\begin{align}
\dim V^{\pm}&=n \label{b1}\,,\\
V^+\cap V^-&=\{0\} \label{b2}
\end{align}
and hence
\begin{equation}\label{b3}
V^+\oplus V^-=\cc^n\oplus\cc^n\,.    
\end{equation}
The second part of (ii) follows similarly.\\

The Hamiltonian $H$, cf. (\ref{hamiltonian pos}), is essentially self-adjoint on its minimal domain $C_0^{\infty}(\rr,\cc^n)$, hence it has deficiency indices $\langle0,0\rangle$. Let $H_+$ be the same operator $H$ but now acting on $L^2(\rr_+,\cc^n)$ with corresponding minimal domain; or rather its closure. It is a symmetric operator with deficiency indices $\langle n,n\rangle$, i.e.
\begin{equation}\label{b4}
\dim\ker (z-H_+^*)=n    
\end{equation}
for $\Im z\gtrless 0$. Actually, this holds true for any $z\notin \sigma(H)$ and rests on the same stability argument that underlies the constancy of (\ref{b4}) on the upper and lower half-plane (see e.g. the proof of \cite[Thm. X.1]{reedsimonii}) or even in a slit plane (Corollary thereof). It rests on Eq. (X.1) there, viz.
\begin{equation*}
\Vert(z-H_+)\varphi)\Vert\geq\rho\Vert\varphi\Vert\,,
\end{equation*}
with $\rho>0$ and for all $\varphi\in D(H_+)$, which holds true here for $z\notin\sigma(H)$ because $D(H_+)\subset D(H)$ under the natural embedding $L^2(\rr_+,\cc^n)\subset L^2(\rr,\cc^n)$. That proves (\ref{b4}) for $z$ as stated. Since $H_+^*$ is given by the same formal expression as $H$, and in fact on $\rr_+$ yet without boundary conditions, \cref{b4} is equivalent to the $+$ case of (\ref{b1}).\\

As for (\ref{b2}), any $\Psi$ in that intersection is the initial data of a solution $\psi$ of $H(s)\psi=z\psi$ as an equation in $L^2$, thus $\psi=0$ by $z\notin H(s)$.
\item We have to show the following implication to be valid for $\Psi\in\widetilde{V}^-$:
\begin{equation}\label{b5}
\omega(\widetilde{\Psi},\Psi)=0\,, \enskip (\Psi\in V^+_{(z,s,x_0)})\implies\widetilde{\Psi}=0 \,;    
\end{equation}
as well as a similar one valid for $\Psi\in V^+$, having an analogous proof.\\

We claim that the hypothesis of (\ref{b5}) implies itself and more generally for $\Psi\in\cc^n\oplus\cc^n$ instead of $\Psi\in V^+$. Once this is established, \cref{b5} follows from the non-degeneracy of $\omega$ on $\cc^n\oplus\cc^n$. To show our claim, we use (\ref{b3}). Any $\Psi\in\cc^n\oplus\cc^n$ is thus decomposable as $\Psi=\Psi^++\Psi^-$, with $\Psi^{\pm}\in V^{\pm}$, and we have
\begin{equation*}
\omega(\widetilde{\Psi},\Psi)=\omega(\widetilde{\Psi},\Psi^+)+\omega(\widetilde{\Psi},\Psi^-)=0
\end{equation*}
\end{enumerate}
by the original hypothesis of (\ref{b5}) and by evaluating the Wronskian (\ref{3.7'}) at $x=-\infty$.
\end{proof}

\begin{proof}[Proof of \cref{prop 3.5}]
Let $C:\cc\rightarrow\cc$, $z\mapsto\bar{z}$ be complex conjugation and $\overline{\mathbb{T}}_3=\gamma\times S^1\times S^1$, cf.~(\ref{3torus}). (One could pick $\gamma$ so that $\gamma=\overline{\gamma}$, hence $\overline{\mathbb{T}}_3$=$\mathbb{T}_3$, but this is not necessary.) As in the proof of \cref{lemma1}, let
\begin{equation}\label{potentials app b}
V^{\pm}\equiv V^{\pm}_{(z,s,x_0)}\,,\qquad \widetilde{V}^{\pm}\equiv \widetilde{V}^{\pm}_{(z,s,x_0)}
\end{equation}
be the spaces of solutions $\psi=\psi(x)$ and $\widetilde{\psi}=\widetilde{\psi}(x)$ of (\ref{schroed1}, \ref{schroed2}), respectively, that decay at $x\rightarrow\pm\infty$, cf.~(\ref{lemma1 eq1}). As before, $\psi(x), \widetilde{\psi}(x)\in\cc^n$ consist of column and row vectors. The corresponding vector bundles on $\mathbb{T}_3$ are denoted $P_V^{\pm}\subset E$, $\widetilde{P}_V^{\pm}\subset\widetilde{E}$, cf. (\ref{3.10a}), where the subscript emphasizes that they depend on the potentials $V=V(x,s)$ because their fibers (\ref{potentials app b}) do.\\

Transposition $T$ and complex conjugation $K$ act on $\cc^n$, with the former turning column into row vectors. They act pointwise on $\psi$, $\widetilde{\psi}$, as well as on matrix-valued potentials $V$. By $V=V^*$ we have $V^T=\overline{V}$.\\

We claim that $T$, $K$ induce bundle (anti-)isomorphisms
\begin{equation}\label{commutative diag}
\begin{tikzcd}
    P_V^\pm & \widetilde{P}_{V^T}^\pm \\
	\mathbb{T}_3 & \mathbb{T}_3
	\arrow["T",from=1-1, to=1-2]
	\arrow[from=1-1, to=2-1]
	\arrow[from=1-2, to=2-2]
	\arrow["\text{id}",from=2-1, to=2-2]
\end{tikzcd}\,,
\qquad\qquad
\begin{tikzcd}
    P_V^\pm & P_{\overline{V}}^\pm \\
	\mathbb{T}_3 & \overline{\mathbb{T}}_3
	\arrow["K",from=1-1, to=1-2]
	\arrow[from=1-1, to=2-1]
	\arrow[from=1-2, to=2-2]
	\arrow["C",from=2-1, to=2-2]
\end{tikzcd}\,,
\end{equation}		
where the vertical arrows are the bundle projections. This follows by taking the transpose and the complex conjugation of (\ref{schroed1}), which read
\begin{equation*}
-(\psi^T)''(x)+\psi^T(x)V^T(x,s)=z\psi^T(x)\,,\qquad -\overline{\psi}''(x)+\overline{V}(x,s)\overline{\psi}(x)=\bar{z}\overline{\psi}(x)\,.
\end{equation*}
We next consider restrictions of the above bundles to a fixed $z$, as done already for $P_V^+$ in (\ref{third chern number}). First, we compare different bundles related to the same potential $V$ by claiming 
\begin{equation*}
\ch(P_V^+)_z=-\ch(P_V^-)_z\,,\qquad \ch(P_V^+)_z=-\ch(\widetilde{P}_V^-)_z\,.
\end{equation*}
The first equation arises because $P_V^+\oplus P_V^-=\mathbb{T}_3\times (\cc^n\oplus\cc^n)$ is trivial, cf. (\ref{b3}); the second, because the two sides are related by an interchange of $d$ and $\widetilde{d}$ in (\ref{appA 7}). We conclude that the r.h.s. of the two equations are equal. Likewise,
\begin{equation}\label{b8}
\ch(P_V^+)_z=\ch(\widetilde{P}_V^+)_z\,.
\end{equation}
Second, we compare different potentials $V$ by claiming
\begin{equation*}
\ch(P_V^+)_z=\ch(\widetilde{P}_{V^T}^+)_z\,,\qquad \ch(P_V^+)_z=-\ch(P_{\overline{V}}^+)_{\bar{z}}\,.
\end{equation*}
This follows from  (\ref{commutative diag}) and the anti-linearity of the second morphism. The minus sign comes from the reality of Chern numbers, i.e. $\ch(P_V^+)=\overline{\ch(P_V^+)}$, from $\bar{i}=-i$ in (\ref{appA 7}) and from $C$ not affecting the orientation $ds\wedge dx$. Finally, using also (\ref{b8}), we find
\begin{equation*}
\ch(P_V^+)_z=\ch(\widetilde{P}_V^+)_z=\ch(P_{V^T}^+)_z=\ch(P_{\overline{V}}^+)_z=-\ch(P_V^+)_{\bar{z}}\,.
\end{equation*}
That yields $\ch(P_V^+)_z=0$ first for $z=\bar{z}$, such as for $z=\mu$, and then for all $z\in\gamma$, since it is constant. 
\end{proof}

\section{}\label{app C}
We shall derive \cref{green} by providing two lemmas of independent interest. They expand the derivation of \cite[Eq.$\,$(35)]{ortelli}. To this end we rename the matrix solutions $\psi(x),\widetilde{\psi}(x)\in M_n(\cc)$ of (\ref{schroed1}) and (\ref{wronsk}) as $\psi_+(x)$, $\widetilde{\psi}_-(x)$ respectively, so as to remind the reader of the end of the line, $x\rightarrow\pm\infty$, at which they decay (\ref{lemma1 eq1}). Consistently, (\ref{3.7'}, \ref{gauge fix}) now reads 
\begin{equation*}
W(\widetilde{\psi}_-,\psi_+)=\mathds{1}\,.    
\end{equation*}
We complement those matrix solutions by two more, $\psi_-(x)$ and $\widetilde{\psi}_+(x)$, which differ from the above in that they decay at the opposite ends. Let their normalization be $W(\widetilde{\psi}_+,\psi_-)=-\mathds{1}$, cf. \cite[Eq.$\,$(36)]{ortelli}. Clearly, $W(\widetilde{\psi}_{\pm},\psi_{\pm})=0$ because the (constant) Wronskian can be evaluated at $x\rightarrow\pm\infty$, where both solutions decay. These Wronskians can be summarized in terms of the matrices
\begin{equation}\label{c1}
\underline{\psi}=\begin{pmatrix}
\psi_{+}, \psi_{-}\end{pmatrix}\,, \qquad \underline{\widetilde{\psi}}=\begin{pmatrix}
\widetilde{\psi}_{+} \\
\widetilde{\psi}_{-}
\end{pmatrix}
\end{equation}
of sizes $n\times 2n$ and $2n\times n$, and in fact by 
\begin{equation}\label{c2}
\underline{\widetilde{\psi}}\, \underline{\psi}^{\prime}-\widetilde{\underline{\psi}}'\, \underline{\psi}=-\omega\,,
\end{equation}
where $\omega$ is the matrix of order $2n$ underlying the symplectic form (\ref{bilinear form}). Indeed, the l.h.s. of (\ref{c2}) equals
\begin{equation*}
W(\underline{\widetilde{\psi}}, \underline{\psi})=\begin{pmatrix}
W(\widetilde{\psi}_{+},\psi_{+}) & W(\widetilde{\psi}_{+},\psi_{-}) \\
W(\widetilde{\psi}_{-},\psi_{+}) & W(\widetilde{\psi}_{-}, \psi_{-})\end{pmatrix}=-\begin{pmatrix}
0 & \mathds{1}_n \\
-\mathds{1}_n & 0
\end{pmatrix}\,.
\end{equation*}
\begin{lemma}
Any solution $\psi=\psi(x)\in\cc^n$ of the ODE (\ref{schroed1}) is determined by its initial values $\psi(y),\psi^{\prime}(y)$ at $x=y$ as follows:
\begin{equation}\label{c3}
\begin{pmatrix}
\psi(x) \\
\psi^{\prime}(x)
\end{pmatrix}=P(x, y)\begin{pmatrix}
\psi(y) \\
\psi^{\prime}(y)
\end{pmatrix}\,,
\end{equation}
where the (spatial) propagator is given by
\begin{equation}\label{c4}
P(x,y)=\begin{pmatrix}
-\underline{\psi}(x) \omega \widetilde{\underline{\psi}}'(y) & \underline{\psi}(x) \omega \underline{\widetilde{\psi}}(y) \\
-\underline{\psi}'(x) \omega \widetilde{\underline{\psi}}'(y) & \underline{\psi}'(x) \omega \underline{\widetilde{\psi}}(y)
\end{pmatrix}\,.
\end{equation}
\begin{proof}
Let $P$ be defined by (\ref{c4}). By linearity it suffices to prove (\ref{c3}) for $\psi$  being any column vector of $\psi_+$ or $\psi_-$, which is tantamount to replacing $\psi$ by $\underline{\psi}$ in the claim, cf. (\ref{c1}), and thus $\psi^{\prime}$ by $\underline{\psi}^{\prime}$ as well. The first block row of the r.h.s. then equals     
\begin{equation*}
-\underline{\psi}(x) \omega \widetilde{\underline{{\psi}}}'(y) \underline{\psi}(y)+\underline{\psi}(x) \omega \underline{\widetilde{\psi}}(y) \underline{\psi}^{\prime}(y) =-\underline{\psi}(x) \omega(\widetilde{\underline{{\psi}}}'(y) \underline{\psi}(y)-\underline{\widetilde{\psi}}(y) \underline{\psi}^{\prime}(y)) = -\underline{\psi}(x)\omega^2 =\underline{\psi}(x)
\end{equation*}
by (\ref{c2}). The second row just differs in that $\underline{\psi}(x)$ is replaced by $\underline{\psi}^{\prime}(x)$.
\end{proof}
\end{lemma}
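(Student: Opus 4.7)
My plan is to verify \eqref{c4} by direct matrix algebra, exploiting linearity of the ODE and the normalization identity \eqref{c2}. Since both sides of \eqref{c3} are linear in $\psi$, and since the candidate propagator $P(x,y)$ is a single matrix asserted to work for every solution, it suffices to test \eqref{c3} on a basis of the $2n$-dimensional space of solutions. A natural choice is the basis given by the columns of $\underline{\psi}$, whose initial data at $y$ span all of $\cc^{n}\oplus\cc^{n}$ by \eqref{b3}. Concretely, I would substitute $\psi\mapsto\underline{\psi}$ (and $\psi'\mapsto\underline{\psi}'$) throughout \eqref{c3}, which reduces the claim to two matrix identities asserting that $P(x,y)$ applied to $\bigl(\underline{\psi}(y),\underline{\psi}'(y)\bigr)^{\mathsf T}$ reproduces $\bigl(\underline{\psi}(x),\underline{\psi}'(x)\bigr)^{\mathsf T}$.

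Consider the top block row. Substitution yields
\[
-\underline{\psi}(x)\,\omega\,\underline{\widetilde{\psi}}'(y)\,\underline{\psi}(y)
+\underline{\psi}(x)\,\omega\,\underline{\widetilde{\psi}}(y)\,\underline{\psi}'(y)
=-\underline{\psi}(x)\,\omega\bigl(\underline{\widetilde{\psi}}'(y)\,\underline{\psi}(y)-\underline{\widetilde{\psi}}(y)\,\underline{\psi}'(y)\bigr).
\]
By \eqref{c2} the parenthesised factor equals $\omega$, and the canonical symplectic matrix satisfies $\omega^{2}=-\mathds{1}_{2n}$, so the expression collapses to $\underline{\psi}(x)$, as required. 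The bottom block row is handled identically, the prefactor $\underline{\psi}(x)$ being replaced by $\underline{\psi}'(x)$ while the remaining computation is unchanged.

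I expect the only delicate step to be the sign bookkeeping that turns the four Wronskian normalizations $W(\widetilde\psi_{-},\psi_{+})=\mathds{1}$, $W(\widetilde\psi_{+},\psi_{-})=-\mathds{1}$, $W(\widetilde\psi_{\pm},\psi_{\pm})=0$ into the clean block identity \eqref{c2}, together with recognising that the resulting $2n\times 2n$ matrix is (up to sign) the standard symplectic form squaring to $-\mathds{1}$. Once that algebraic fact is in hand the remainder is a one-line computation, so I do not anticipate a conceptual obstacle; the content of the lemma is really that the elegant ansatz \eqref{c4} is forced by \eqref{c2} and the involutive nature of $\omega$.
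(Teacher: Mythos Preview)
Your proposal is correct and follows essentially the same route as the paper: reduce by linearity to the basis of solutions given by the columns of $\underline{\psi}$, apply \eqref{c2} to collapse the bracketed factor to $\omega$, and use $\omega^{2}=-\mathds{1}_{2n}$ to obtain $\underline{\psi}(x)$ (resp.\ $\underline{\psi}'(x)$) in the two block rows.
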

\begin{lemma}
The Green's function $G(x,x')\equiv G(x,x';z)$ of the Hamiltonian (\ref{hamiltonian pos}) at $z\notin \sigma(H)$ is \begin{equation}\label{c5}
G(x, x^{\prime})=-\theta(x-x^{\prime}) \psi_{+}(x) \widetilde{\psi}_{-}(x^{\prime})-\theta(x^{\prime}-x) \psi_{-}(x) \widetilde{\psi}_{+}(x^{\prime})\,.
\end{equation}   
\end{lemma}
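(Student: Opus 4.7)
The plan is to show that $G(x, x')$ as defined in (\ref{c5}) is the integral kernel of the resolvent $(H(s)-z)^{-1}$ by verifying the two defining properties: as a distribution in $x$ for each fixed $x'$, $(-\partial_x^2 + V(x,s) - z)\,G(\cdot, x') = \delta(\cdot - x')\mathds{1}$, and $G(\cdot, x') \in L^2(\mathbb{R}, \mathbb{C}^n)$. Uniqueness of such a kernel at $z \notin \sigma(H(s))$ then yields the claim.

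Away from the diagonal, $x \mapsto G(x, x')$ is a constant-coefficient combination of columns of $\psi_\pm$, so it solves (\ref{schroed1}) and $(H-z)G = 0$ there; the decay of $\psi_+$ at $+\infty$ and $\psi_-$ at $-\infty$ gives the $L^2$-integrability. The $\delta$-singularity must therefore come entirely from the jump across $x = x'$. Concretely, I need (i) continuity $G(x'^+, x') = G(x'^-, x')$ and (ii) the jump relation $[\partial_x G(\cdot, x')]_{x'^-}^{x'^+} = -\mathds{1}$, so that $-\partial_x^2 G$ generates the desired $+\mathds{1}\,\delta(x - x')$ while the smooth remainder is absorbed by $(V - z)G$. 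Written out in terms of the defining pieces of $G$, these amount to
\begin{align*}
\psi_+(x)\widetilde{\psi}_-(x) - \psi_-(x)\widetilde{\psi}_+(x) &= 0\,, \\
\psi_+'(x)\widetilde{\psi}_-(x) - \psi_-'(x)\widetilde{\psi}_+(x) &= \mathds{1}\,,
\end{align*}
for every $x$.

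The main step is thus to establish these two identities. Note that they involve \emph{outer} products $\psi(x)\widetilde{\psi}(x) \in M_n(\mathbb{C})$, which are not directly given by the Wronskian normalizations $W(\widetilde{\psi}_\pm, \psi_\pm) = 0$ and $W(\widetilde{\psi}_-, \psi_+) = \mathds{1}$, $W(\widetilde{\psi}_+, \psi_-) = -\mathds{1}$ (those are inner-type products $\widetilde{\psi}\psi' - \widetilde{\psi}'\psi$). They are, however, precisely encoded in the propagator formula (\ref{c4}) of the preceding lemma: since $P(x, y)$ propagates initial data of any solution, one has $P(x, x) = \mathds{1}_{2n}$. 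Reading this off blockwise with $\omega = \left(\begin{smallmatrix} 0 & I \\ -I & 0 \end{smallmatrix}\right)$, the $(1,2)$-block gives $\underline{\psi}(x)\omega\underline{\widetilde{\psi}}(x) = 0$, which expands via (\ref{c1}) to the first identity above, and the $(2,2)$-block gives $\underline{\psi}'(x)\omega\underline{\widetilde{\psi}}(x) = \mathds{1}$, which expands to the second.

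With continuity, the jump relation, and solution-hood away from the diagonal all in hand, the distributional computation $(H-z)G(\cdot, x') = \delta(\cdot - x')\mathds{1}$ is routine and combining it with the $L^2$-decay identifies $G$ with the resolvent kernel by uniqueness. The main obstacle---modest, but the true pivot of the argument---is recognizing that continuity and the jump at the diagonal require \emph{outer}-product identities that are dual to, and not the same as, the Wronskian conditions used to define $\psi_\pm, \widetilde{\psi}_\pm$, and noting that these are supplied at once by the propagator identity $P(x, x) = \mathds{1}$ of the previous lemma.
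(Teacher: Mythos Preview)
Your proof is correct and follows essentially the same route as the paper: verify that the candidate kernel solves the Schr\"odinger equation for $x\neq x'$, and then extract the continuity and jump conditions at the diagonal from the identity $P(x,x)=\mathds{1}_{2n}$ of the preceding lemma, reading off the $(1,2)$ and $(2,2)$ blocks to obtain the two outer-product relations. Your inclusion of the $L^2$-decay and uniqueness remark makes the argument slightly more explicit than the paper's, but the substance is the same.
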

\begin{remark}
\cref{green} is a special case thereof.   
\end{remark}
\begin{proof}
Through the definition of the resolvent, i.e. $(H-z)R(z)=\mathds{1}$, that of the Green's function amounts to
\begin{gather}
\Big(-\frac{\partial^2}{\partial x^2}+V(x)-z\Big) G(x, x^{\prime})=0\,, \qquad (x\neq x')\,,\label{c6} \\
G(x, x^{\prime})\big|_{x=x^{\prime}-0} ^{x=x'+0}=0\,,\label{c7} \\
-\frac{\partial}{\partial x} G(x, x^{\prime})\big|_{x=x^{\prime}-0} ^{x=x^{\prime}+0}=\mathds{1}\,.\label{c8}
\end{gather}
We let $G$ be satisfied by (\ref{c5}) and we shall show that it satisfies all three equations. \cref{c6} is satisfied by the definitions of $\psi_{\pm}(x)$. To verify the remaining ones, we first note that $P(x,x)=\mathds{1}$ implies, by its second block column, cf. (\ref{c4}),
\begin{align*}
\psi_{+}(x) \widetilde{\psi}_{-}(x)-\psi_{-}(x) \widetilde{\psi}_{+}(x)&=0\,, \\
\psi_{+}^{\prime}(x) \widetilde{\psi}_{-}(x)-\psi_{-}^{\prime}(x) \widetilde{\psi}_{+}(x)&=\mathds{1}\,.    
\end{align*}
\cref{c7,c8} are now satisfied because of (\ref{c5}) and in view of
\begin{equation*}
-\frac{\partial}{\partial x}G(x,x^{\prime})=\begin{cases}
\psi_{+}^{\prime}(x) \widetilde{\psi}_{-}(x)\,,\qquad (x=x^{\prime}+0)\,, \\
\psi_{-}^{\prime}(x) \widetilde{\psi}_{+}(x)\,, \qquad(x=x^{\prime}-0)\,.
\end{cases}
\end{equation*}
\end{proof}
\end{appendices}
\textbf{Acknowledgements.} We thank A. Bols, M. Delladio and R. Seiler for useful discussions. Support by the Swiss National Science Foundation under project funding ID: TMAG-2\textunderscore 209376 and ID: 200021\textunderscore 207537 is acknowledged.\\

\textbf{Data availability and conflict of interest.} Data availability is not applicable to this article as no data were created or analyzed in this study. There are no known conflicts of interest.

\nocite{*}
\bibliography{biblio}
\end{document}